\newtheorem{thm1}{\bf Theorem}
\newtheorem{prop1}{\bf Proposition}
\newtheorem{lem1}{\bf Lemma}
\newtheorem{assmpt1}{\bf Assumption}
\newtheorem{defn1}{\bf Definition}
\newtheorem{rem1}{\bf Remark}
\newtheorem{cor1}{\bf Corollary}
\newenvironment{asm}{\begin{assmpt1}}{\hfill$\Diamond$ \end{assmpt1}}
\newenvironment{rem}{\begin{rem1}}{\hfill$\Diamond$\end{rem1}}
\newenvironment{lem}{\begin{lem1}}{\hfill$\Diamond$\end{lem1}}
\newenvironment{thm}{\begin{thm1}}{\hfill$\Diamond$\end{thm1}}
\newcommand{\bR}{{\mathbb{R}}}
\newcommand{\cC}{{\mathcal{C}}}
\newcommand{\cE}{{\mathcal{E}}}
\newcommand{\cG}{{\mathcal{G}}}
\newcommand{\cL}{{\mathcal{L}}}
\newcommand{\cN}{{\mathcal{N}}}
\newcommand{\cX}{{\mathcal{X}}}
\newcommand{\rd}{{\rm d}}
\DeclareMathOperator*{\argmin}{argmin}
\newcommand*{\argminl}{\argmin\limits}
\newcommand{\rt}[1]{\textcolor{black}{#1}}
\title{\LARGE \bf 
	Distributed Algorithm for Economic Dispatch Problem \\ with Separable Losses
}
\author{Seungjoon Lee and Hyungbo Shim
	\thanks{This work was supported by the National Research Foundation of Korea (NRF) grant funded by the Korea government (Ministry of Science and ICT) (No. NRF-2017R1E1A1A03070342).}% <-this % stops a space
	\thanks{S. Lee and H. Shim are with ASRI, Department of Electrical and Computer Engineering, Seoul National University, Seoul, Korea. {\tt\small seungjoon.lee@cdsl.kr, hshim@snu.ac.kr}}%
}
\begin{document}
	\maketitle
	\thispagestyle{empty}
	\pagestyle{empty}
	
	\begin{abstract}
		Economic dispatch problem for a networked power system has been considered.
		The objective is to minimize the total generation cost while meeting the overall supply-demand balance and generation capacity.
		In particular, a more practical scenario has been studied by considering the power losses.
		A non-convex optimization problem has been formulated where the non-convexity comes from the nonlinear equality constraint representing the supply-demand balance with the power losses.
		It is shown that the optimization problem can be solved using convex relaxation and dual decomposition.
		A simple distributed algorithm is proposed to solve the optimization problem. 
		Specifically, the proposed algorithm does not require any initialization process and hence robust to various changes in operating condition.
		In addition, the behavior of the proposed algorithm is analyzed when the problem is infeasible.
	\end{abstract}

	\section{Introduction}    
	One of the fundamental problems which arise in the operation of the power system is to balance the overall energy demand with generation. 
	In particular, finding the optimal generation has been an important problem due to the socioeconomic impacts of the power system in modern society.
	The problem of finding the optimal generation is termed as the economic dispatch problem (EDP) \cite{Wood2012}. 
	The EDP is often formulated as an optimization problem to decide the power generation of each generator subject to various constraints while minimizing the generation cost. 
	Constraints must include the overall supply and demand balance, while additional constraints such as local generation capacity are often imposed. 
	
	The EDP has been studied extensively for the past several decades. 
	Early works were focused on developing centralized algorithms for the EDP. 
	For example, numerical methods \cite{Gaing2003} and Lagrangian relaxation \cite{Guo1996} are developed to solve the EDP.
	However, the power network is growing with the introduction of the smart grid and distributed energy resources.
	Hence, there have been significant efforts in recent years to develop a distributed algorithm to solve the EDP due to its scalability and resiliency.
	Various discrete-time algorithms have been proposed in the literature to solve the EDP in a distributed manner \cite{Elsayed2015,Yang2013,Kar2014}.
	However, most of these works are not suitable for plug-and-play operation due to requiring an initialization process \cite{Elsayed2015,Yang2013} or decaying step sizes \cite{Kar2014}.
	
	On the other hand, continuous-time algorithms are also studied due to the ease of applying classical results on the stability of nonlinear systems \cite{Ahn2018,Cherukuri2016,Yi2016,Yun2019}.
	In \cite{Ahn2018}, authors considered the EDP with power distribution, but it requires an initialization process and had no capacity constraints.
	Initialization-free algorithms are proposed in \cite{Cherukuri2016,Yi2016} which employed dynamic average consensus and in \cite{Yun2019} using dual decomposition and strong coupling.
	
	Most of the works for the EDP mentioned so far considered an ideal scenario where there are no losses in the power system.
	However, there are various forms of losses occurring in the power network which are significant to the operation of the power system.
	For example, losses of an electrical generator such as copper losses or core losses are up to $10\%$ of generation depending on the operating condition \cite{Dutta2013}.
	In addition, transmission and distribution of the power also cause losses, further lowering the overall efficiency. % of the power system.
	Therefore, it is important to solve the EDP considering the losses of the system.
	
	Solutions to the EDP with power losses mainly have been developed as a centralized algorithm. 
	For instance, numerical methods are proposed in \cite{Yalcinoz1998,Gaing2003}. 
	For the distributed algorithm, there are only a few works which studied the EDP with power losses. 
	Authors of \cite{Binetti2014} propose a distributed algorithm considering the transmission losses. 
	However, it is assumed that the power losses can be computed at each iteration and the power losses were not dependent on the decision variables. 
	On the other hand, \cite{Zhao2017} considered the power losses which depend quadratically on the power generation of each generator. 
	However, it requires an initialization process which is not suitable for plug-and-play operation.
	
	In this work, we propose a continuous-time algorithm which solves the EDP with power losses in a distributed manner. 
	The EDP with power losses are formulated as a non-convex optimization problem \rt{with the assumption that the power losses are separable}. % 
	\rt{ 
	Despite the non-convexity, it is shown that an optimal solution can be recovered using convex relaxation under mild assumptions.%
	}
	Proposed algorithm does not require any initialization process, thus allowing the plug-and-play operation. 
	In particular, the proposed algorithm is robust to changes such as change of demands or network topology. 
	\rt{
	The trade-off for having a robust algorithm is that the obtained solution is suboptimal. %
	However, it is shown that with sufficiently high gain, an optimal solution is recovered.%
	}
	Finally, behavior of the algorithm is analyzed when the problem is infeasible.%
		
	\textbf{Notation:} 
	For vectors $x_i \in \bR^{n}$ with $i=1,\ldots,N$, $[x_1^T,\ldots,x_N^T]^T$ is denoted by $[x_1;\cdots;x_N] \in \bR^{\bar{n}}$ where $\bar{n}:= Nn$. 
	An undirected graph is defined as $\cG=(\cN,\cE)$ where $\cN=\{1,\ldots,N\}$ is the node set and $\cE \subseteq \cN \times \cN$ is the edge set.
	The Laplacian matrix $L=[l_{ij}] \in {\mathbb{R}}^{N \times N}$ is defined as $l_{ij}:=-1$ if $(j,i)\in \cE$ and $l_{ij} := 0$ otherwise for $i \neq j$, and $l_{ii}:=-\sum_{j \neq i} l_{ij}$.
	Eigenvalues of $L$ is denoted as $ 0 =\sigma_1(L) \leq \ldots \leq \sigma_N(L)$.
	Given a set $\cX \subset \bR^n$, let $|x|_\cX := \inf_{z \in \cX} |x-z|$.
	We denote a set of continuously differentiable functions as $\cC^1$. 
	Derivative of a function $f(x)$ is denoted as $df/dx$ or $f'$.
	
	\section{Problem Formulation} \label{sec:prob}
	Consider the power network with $N$ nodes in the system. 
	Then, the EDP with power losses can be formulated as the following optimization problem:
	\begin{subequations} \label{eq:prim_prob}
		\begin{align}
		\min_{x_1,\ldots,x_N}~& \textstyle\sum_{i=1}^N f_i(x_i) \label{eq:prim_cost}\\
		\mathrm{subject\ to}~& \textstyle\sum_{i=1}^N d_i = \textstyle\sum_{i=1}^N  x_i - \phi(x) \label{eq:prim_eq}\\
		& x_i \in \cX_i \label{eq:prim_cap}, \quad \forall i \in \cN
		\end{align}
	\end{subequations}
	where $x_i \in \bR$ is the power generation (before losses) of the node $i$, $d_i \in \bR$ is the power demand, $f_i(x_i):\bR \rightarrow \bR$ is the local cost function, $x := [x_1;\ldots;x_N] \in \bR^N$, and $\phi(x):\bR^{N} \rightarrow \bR$ represents the power losses.
	The set $\cX_i := [\underline{x}_i, \bar{x}_i]$ is a nonempty closed interval where $\underline{x}_i$ and $\bar{x}_i$ are the minimum and maximum generation of the node $i$ respectively. 
	The objective of the problem \eqref{eq:prim_prob} is to minimize the generation cost \eqref{eq:prim_cost} subject to overall supply and demand balance considering the power losses \eqref{eq:prim_eq} and generation capacity constraints \eqref{eq:prim_cap}.

	Let $\cX:= \cX_1 \times \cdots \times \cX_N$, $\bar{x} := [\bar{x}_1;\ldots;\bar{x}_N]$, $d := [d_1;\ldots;d_N]$ and $\underline{x} := [\underline{x}_1;\ldots;\underline{x}_N]$.
	It is supposed that the information such as $f_i, x_i, d_i$, and $\cX_i$ is private to each node and is not shared with its neighbors. 
	
	In this work, we suppose that the loss is \textit{separable}, i.e., 
	\begin{align} \label{eq:loss}
	\phi(x) = \textstyle \sum_{i=1}^N \phi_i(x_i),
	\end{align}
	where $\phi_i(x_i):\bR \rightarrow \bR$ is a nonlinear function. 
	\rt{
	The loss given by \eqref{eq:loss} includes various forms of losses. 
	For example, it models losses of each generator such as copper losses or mechanical losses \cite{Grauers1996} where the separability assumption is naturally satisfied. 
	In addition, \eqref{eq:loss} also includes simplified model for the transmission losses. 
	Separable model for the transmission losses are also employed in previous works, e.g., see \cite{Yalcinoz1998} and \cite{Zhao2017}.
	}
	
	Note that the optimization problem \eqref{eq:prim_prob} is not a convex optimization problem due to the nonlinear equality constraint \eqref{eq:prim_eq}.
	
	\begin{asm} \label{ass:basic}
		The local cost function $f_i$ and loss function $\phi_i$ are $\cC^1$, $f_i$ is strictly convex and $\phi_i$ is convex over $\cX_i$ for $i=1,\dots,N$. 
		Moreover, $\phi_i$ satisfies 
		\begin{align} \label{ass:loss_slope}
		\frac{d \phi_i(x_i)}{d x_i} < 1, \quad \forall i \in \cN,
		\end{align}
		for all $\underline{x}_i \leq x_i \leq \bar{x}_i$.
	\end{asm}
	
	Inequality \eqref{ass:loss_slope} of Assumption \ref{ass:basic} implies that the incremental loss of each node cannot exceed the incremental generation, which is reasonable. 
	In what follows, we give a necessary and sufficient condition for the feasibility of the optimization problem \eqref{eq:prim_prob}.
	
	\begin{lem} \label{lem:feas}
		Suppose that Assumption \ref{ass:basic} holds. 
		Then, 
		\begin{align}\label{eq:feas_cond}
		\textstyle\sum_{i=1}^N \underline{x}_i - \phi_i(\underline{x}_i) \leq \textstyle\sum_{i=1}^{N} d_i \leq \textstyle\sum_{i=1}^{N} \bar{x}_i - \phi_i(\bar{x}_i)
		\end{align}
		holds if and only if the optimization problem \eqref{eq:prim_prob} is feasible.
	\end{lem}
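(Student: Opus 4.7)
The plan is to exploit the fact that inequality \eqref{ass:loss_slope} makes the one-dimensional maps $g_i(x_i) := x_i - \phi_i(x_i)$ strictly increasing on $\cX_i$, so the total ``net generation'' map $G(x) := \sum_{i=1}^N g_i(x_i) = \sum_i x_i - \phi(x)$ is continuous and, coordinatewise, monotone. The feasibility of \eqref{eq:prim_prob} is precisely the question of whether the value $\sum_i d_i$ lies in the range of $G$ restricted to the box $\cX$.

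For the ``only if'' direction, I would assume $x \in \cX$ with $G(x) = \sum_i d_i$ and invoke monotonicity of each $g_i$ on $[\underline{x}_i,\bar{x}_i]$ (which follows from $g_i'(x_i) = 1 - \phi_i'(x_i) > 0$ by Assumption~\ref{ass:basic}) to sandwich $g_i(\underline{x}_i) \leq g_i(x_i) \leq g_i(\bar{x}_i)$; summing over $i$ reproduces \eqref{eq:feas_cond}. This direction is essentially immediate.

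For the ``if'' direction, I would observe that the endpoints of $G$ over the box are exactly $G(\underline{x}) = \sum_i (\underline{x}_i - \phi_i(\underline{x}_i))$ and $G(\bar{x}) = \sum_i (\bar{x}_i - \phi_i(\bar{x}_i))$, again by coordinatewise monotonicity. Since $\phi_i \in \cC^1$, the function $G$ is continuous, and since $\cX$ is path-connected (a product of intervals), I can apply the intermediate value theorem along, say, the straight-line segment $t \mapsto \underline{x} + t(\bar{x} - \underline{x})$, $t \in [0,1]$, whose endpoint values are $G(\underline{x})$ and $G(\bar{x})$. Condition \eqref{eq:feas_cond} places $\sum_i d_i$ in the interval $[G(\underline{x}), G(\bar{x})]$, so the IVT yields some $t^\star$ such that the corresponding point $x^\star \in \cX$ satisfies $G(x^\star) = \sum_i d_i$, i.e., \eqref{eq:prim_eq} holds together with \eqref{eq:prim_cap}.

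I do not expect a serious obstacle here; the only subtlety is ensuring that the IVT is applied on a connected subset of $\cX$ rather than merely using the extreme values of $G$ on a non-convex feasible set — which is why I would be explicit about restricting to the segment from $\underline{x}$ to $\bar{x}$. No convexity of the constraint \eqref{eq:prim_eq} is needed; only continuity of $G$ and the box structure of $\cX$ matter.
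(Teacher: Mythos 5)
Your proposal is correct and takes essentially the same route as the paper: both directions rest on the coordinatewise strict monotonicity of $x_i \mapsto x_i - \phi_i(x_i)$ coming from \eqref{ass:loss_slope}, and the ``if'' direction applies the intermediate value theorem along the straight segment from $\underline{x}$ to $\bar{x}$, exactly as in the paper's argument with $D(\underline{x}+\alpha c)$. Your sandwich argument for the ``only if'' direction is just a slightly more direct phrasing of the paper's case analysis at $z^*$, so there is nothing substantive to add.
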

	
	\begin{proof}
		Let $D(x) := \sum_{i=1}^{N} x_i - \phi_i(x_i) - d_i$. 
		Suppose that the problem \eqref{eq:prim_prob} is feasible. 
		Then, there exists a $z^* := [z_1^*;\ldots;z_N^*] \in \bR^N$ such that $z_i^* \in \cX_i$ and $D(z^*) = 0$. 
		Moreover, it follows from Assumption \ref{ass:basic} that
		$\partial D(x)/\partial x_i = 1 - d \phi_i(x_i)/d x_i > 0$
		for all $x_i \in \cX_i$. 
		Therefore, $D(x)$ is strictly increasing in each argument. 
		First, we will show that
		\begin{align} \label{eq:feas_upper}
		0 \leq \textstyle\sum_{i=1}^{N} \bar{x}_i - \phi_i(\bar{x}_i) - d_i
		\end{align}
		holds. 
		If $z^* = \bar{x}$, then \eqref{eq:feas_upper} holds with an equality. 
		If $z^* \neq \bar{x}$, then there exists an index $i\in \cN$ such that $\underline{x}_i \leq z_i^* < \bar{x}_i$ since $z^*$ is a feasible solution. 
		Therefore, $0=D(z^*) < D(\bar{x})$ holds since $D(\cdot)$ is strictly increasing in each argument, proving \eqref{eq:feas_upper}. 
		In a similar manner, it can be shown that $D(\underline{x}) \leq D(z^*)$ as well.
		
		Conversely, suppose \eqref{eq:feas_cond} holds. 
		Consider $D(\underline{x} + \alpha \cdot c)$, where $c \in [0,1]$ and $\alpha := [\bar{x}_1 - \underline{x}_1;\ldots;\bar{x}_N - \underline{x}_N] \in \bR^N$. 
		Then, $D(\underline{x} + \alpha \cdot c)$ is a continuous function.
		In addition, \eqref{eq:feas_cond} can be used to obtain
		$
		D(\underline{x}) \leq 0 \leq D(\bar{x}).
		$
		Therefore, it follows from the intermediate value theorem that there exists a $c^*\in [0,1]$ such that $D(\underline{x} + \alpha c^*) = 0$. 
		Hence, the problem \eqref{eq:prim_prob} is feasible with $\underline{x} + \alpha c^*$ as a  solution. 
	\end{proof}

	\begin{rem}
		It follows from the proof of Lemma \ref{lem:feas} that \eqref{eq:feas_cond} is a sufficient condition for feasibility regardless of \eqref{ass:loss_slope}.
		However, \eqref{eq:feas_cond} is not a necessary condition if \eqref{ass:loss_slope} does not hold. 
		For example, suppose
		$
		\sum_{i=1}^N \underline{x}_i - \phi_i(\underline{x}_i) > \sum_{i=1}^{N} d_i
		$
		such that \eqref{eq:feas_cond} does not hold. 
		Nevertheless, the problem \eqref{eq:prim_prob} may still be feasible if $d\phi_i(x_i)/dx_i > 1$. 
		In particular, if one \textit{loses} more power as $x_i$ is increased, a feasible solution may exist. 
		However, if such cases are not allowed (e.g., by assuming \eqref{ass:loss_slope}), then \eqref{eq:feas_cond} is indeed a necessary and sufficient condition for the feasibility.
	\end{rem}
		
	\section{A Centralized Solution} \label{sec:cent}
	In order to solve the non-convex optimization problem, the following assumption is made.
	
	\begin{asm} \label{ass:mono}
		$df_i(x_i)/dx_i > 0$ for all $x_i \in \cX_i$.
	\end{asm}
	
	Assumption \ref{ass:mono} is easily satisfied in practical scenarios. For instance, it is common to assume that the cost function is given by a quadratic function $f_i(x_i) = a_i + b_i x_i + c_i x_i^2$ where $b_i,c_i > 0$ and $\underline{x}_i \geq 0$. In such case, Assumption \ref{ass:mono} holds.
	
	The optimization problem \eqref{eq:prim_prob} will be relaxed into the following convex optimization problem:
	\begin{subequations} \label{eq:prim_prob_relax}
		\begin{align}
		\min_{x_1,\ldots,x_N} ~& \textstyle\sum_{i=1}^N f_i(x_i)\\
		\mathrm{subject~to} ~&\textstyle\sum_{i=1}^N d_i - x_i + \phi_i(x_i)  \leq 0 \label{eq:prim_eq_relax}\\
		& x_i \in \cX_i, \quad \forall i \in \cN
		\end{align}
	\end{subequations}
	which will be called as the \textit{relaxed problem}. 
	Note that the relaxed problem \eqref{eq:prim_prob_relax} is a convex optimization problem since the equality constraint \eqref{eq:prim_eq} is relaxed into an inequality constraint as \eqref{eq:prim_eq_relax}.
	Nevertheless, it will be shown that an optimal solution of \eqref{eq:prim_prob} is obtained by solving \eqref{eq:prim_prob_relax}.
	
	For the optimization problem \eqref{eq:prim_prob_relax}, define the Lagrangian function $\cL^{r}(x,\lambda):\cX \times \bR \rightarrow \bR$ as
	\begin{align*}
	\cL^r(x,\lambda) 
	{=} \sum_{i=1}^N f_i(x_i) + \lambda \left( d_i - x_i + \phi_i(x_i) \right)
	{=:} \sum_{i=1}^N \cL^r_i(x_i,\lambda)
	\end{align*}
	where $\lambda \in \bR$ is the dual variable. 
	Then, the dual function $g^r(\lambda)$ for \eqref{eq:prim_prob_relax} can be written as
	\begin{align*}
	& g^r(\lambda) = \min_{x \in \cX}\cL^r(x,\lambda) = \sum_{i=1}^N \min_{x_i \in \cX_i} \cL_i^r(x_i,\lambda) =: \sum_{i=1}^N g^r_i(\lambda).
	\end{align*}
	The following lemma gives the expression for $g_i^r(\lambda)$.
	
	\begin{lem} \label{lem:lag_opt}
		Suppose that Assumptions \ref{ass:basic} and \ref{ass:mono} hold.
		Let $v_i:\cX_i \rightarrow \bR$ be defined as 
		\begin{align} \label{eq:lambda_first}
		v_i(x_i) := \frac{df_i(x_i)}{dx_i} \cdot \left(1 - \frac{d\phi_i(x_i)}{dx_i}\right)^{-1}.
		\end{align}
		Then, $v_i(x_i)$ is a strictly increasing function for $\underline{x}_i \leq x_i \leq \bar{x}_i$. 
		Moreover, suppose $\lambda \geq 0$ and let $\hat{x}_i(\lambda)$ be  
		$$
		\hat{x}_i(\lambda) := \begin{cases}
		\underline{x}_i & 0 \leq \lambda \leq v_i(\underline{x}_i) \\
		v_i^{-1}(\lambda) & v_i(\underline{x}_i) < \lambda < v_i(\bar{x}_i) \\ 
		\bar{x}_i & v_i(\bar{x}_i) \leq \lambda.
		\end{cases}
		$$
		Then, $\hat{x}_i(\lambda)$ is the unique minimizer of $\cL^r_i(x_i,\lambda)$, i.e., 
		$
		\hat{x}_i(\lambda) = \argmin_{x_i \in \cX_i} \cL^r_i(x_i,\lambda).
		$
	\end{lem}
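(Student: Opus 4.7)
The plan is to tackle the claim in two stages: first prove that $v_i$ is strictly increasing on $\cX_i$, then characterize the minimizer of $\cL^r_i(\cdot,\lambda)$ via a first-order analysis that exploits the factorization of $\partial\cL^r_i/\partial x_i$ in terms of $v_i$.

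For the monotonicity of $v_i$, I would examine the numerator and denominator of \eqref{eq:lambda_first} separately. By Assumption \ref{ass:basic}, $f_i$ is strictly convex and $\cC^1$ on $\cX_i$, so $df_i/dx_i$ is strictly increasing; by Assumption \ref{ass:mono}, it is also positive. Similarly, convexity of $\phi_i$ makes $d\phi_i/dx_i$ non-decreasing, and condition \eqref{ass:loss_slope} ensures $1-d\phi_i/dx_i>0$, so the denominator is a positive and non-increasing function of $x_i$. The product of a positive, strictly increasing function with a positive, non-decreasing function is strictly increasing, which gives the claim once one writes $v_i = (df_i/dx_i)\cdot (1-d\phi_i/dx_i)^{-1}$.

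For the minimizer, I would first argue that $\cL^r_i(\cdot,\lambda)$ is strictly convex in $x_i$ whenever $\lambda\ge 0$, since $f_i$ is strictly convex and $\lambda\phi_i$ is convex (linear terms in $x_i$ do not affect convexity). Hence the minimizer on the compact interval $\cX_i$ exists and is unique, and its location is determined by the sign of the derivative
\begin{align*}
\frac{\partial \cL^r_i(x_i,\lambda)}{\partial x_i} = \frac{df_i(x_i)}{dx_i}-\lambda\left(1-\frac{d\phi_i(x_i)}{dx_i}\right) = \left(1-\frac{d\phi_i(x_i)}{dx_i}\right)\bigl(v_i(x_i)-\lambda\bigr).
\end{align*}
Since the first factor is strictly positive, the sign of $\partial\cL^r_i/\partial x_i$ equals the sign of $v_i(x_i)-\lambda$. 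I then split into three cases: if $\lambda\le v_i(\underline{x}_i)$, monotonicity of $v_i$ gives $v_i(x_i)\ge\lambda$ throughout $\cX_i$, so $\cL^r_i$ is non-decreasing and its minimum is at $\underline{x}_i$; if $\lambda\ge v_i(\bar{x}_i)$, the symmetric argument puts the minimum at $\bar{x}_i$; if $v_i(\underline{x}_i)<\lambda<v_i(\bar{x}_i)$, continuity and strict monotonicity of $v_i$ give a unique interior stationary point $v_i^{-1}(\lambda)$, which is a minimizer by strict convexity.

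I do not anticipate a serious obstacle: the key ingredient is spotting the factorization of $\partial\cL^r_i/\partial x_i$, after which the case analysis is essentially forced. The only delicate point is justifying the strictness of the monotonicity of $v_i$ without assuming any strictness on $\phi_i''$; this is why I keep track of positivity of the denominator together with strict monotonicity of the numerator, rather than differentiating $v_i$ directly.
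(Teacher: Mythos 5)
Your proposal is correct and follows essentially the same route as the paper: strict monotonicity of $v_i$ from strict monotonicity and positivity of $f_i'$ together with the monotonicity of $\phi_i'$ and the bound \eqref{ass:loss_slope}, then strict convexity of $\cL^r_i(\cdot,\lambda)$ for $\lambda\ge 0$ with a first-order case split around the stationarity condition $v_i(x_i)=\lambda$. The only cosmetic difference is your factorization $\partial\cL^r_i/\partial x_i=\bigl(1-\phi_i'(x_i)\bigr)\bigl(v_i(x_i)-\lambda\bigr)$, which handles the two boundary cases by a sign argument where the paper instead bounds the derivative directly using strict convexity of $f_i$.
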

	
	\begin{proof}
		In order to show $v_i(x_i)$ is a strictly increasing function for $\underline{x}_i \leq x_i \leq \bar{x}_i$, let $z_1,z_2 \in \cX_i$ such that $z_1 > z_2$.
		Then, it follows from \eqref{eq:lambda_first} that
		\begin{align*}
			v_i(z_1) - v_i(z_2) = \frac{f_i'(z_1)\left(1-\phi_i'(z_2)\right) - f_i'(z_2)\left(1-\phi_i'(z_1)\right)}{\left(1-\phi_i'(z_1)\right) \cdot \left(1-\phi_i'(z_2)\right)}.
		\end{align*}
		Since $0 <1-\phi_i'(z_k)$ holds for $k=1,2$ by \eqref{ass:loss_slope}, it is equivalent to show the strict positivity of 
		\begin{align} \label{eq:positive}
			f_i'(z_1)\left(1-\phi_i'(z_2)\right) - f_i'(z_2)\left(1-\phi_i'(z_1)\right).
		\end{align}
		From Assumptions \ref{ass:basic} and \ref{ass:mono}, it holds that
		\begin{align*}
			\frac{f_i'(z_1)}{f'_i(z_2)} > 1 \geq \frac{1-\phi_i'(z_1)}{1-\phi_i'(z_2)}
		\end{align*}
		which proves $v_i(x_i)$ is strictly increasing.
		Consequently, $v_i^{-1}(\lambda)$ is well-defined for $v_i(\underline{x}_i) \leq \lambda \leq v_i(\bar{x}_i)$.
		
		Next, suppose $\lambda \geq 0$. Then, it follows that 
		$\cL^r_i(x_i,\lambda)$ is a strictly convex function in $x_i$ for any fixed $\lambda \geq 0$. Hence, it has a unique minimum and the minimum of $\cL_i^r(x_i,\lambda)$ is obtained when
		$$
		\frac{\partial\cL^r_i(x_i,\lambda)}{\partial x_i} = \frac{df_i(x_i)}{dx_i} + \lambda \frac{d\phi_i(x_i)}{dx_i} - \lambda  = 0
		$$
		which is equivalent to $v_i(x_i) = \lambda$.
		Therefore, $\arg\!\min_{x_i \in \cX_i} \cL^r_i(x_i,\lambda)= v_i^{-1}(\lambda)$ holds for
		$
		\underline{x}_i \leq v_i^{-1}(\lambda)\leq \bar{x}_i
		$
		which becomes
		$
		v_i(\underline{x}_i) \leq \lambda \leq v_i(\bar{x}_i)
		$
		since $v_i(x_i)$ is strictly increasing.
		
		Finally, let $\lambda$ be a fixed scalar such that $\lambda > v_i(\bar{x}_i)$ holds. Then for any $\underline{x}_i \leq x_i < \bar{x}_i$, it follows that 
		\begin{align*}
			\frac{\partial \cL_i^r(x_i,\lambda)}{\partial x_i}  
			< f'_i(\bar{x}_i) - \lambda + \lambda \phi_i'(\bar{x}_i)
			& < 0
		\end{align*}
		where the first strict inequality follows since $f_i$ is strictly convex. 
		Since $\cL_i^r(x_i,\lambda)$ is strictly decreasing, its minimum is obtained at $x_i = \bar{x}_i$. 
		The case when $0 \leq \lambda < v_i(\underline{x}_i)$ can be proven in a similar manner.
	\end{proof}
	
	\begin{rem}
		If $p_i^*:=\bar{x}_i = \underline{x}_i$, (e.g., node $i$ generates a fixed amount of power or no power at all if $p_i^* = \bar{x}_i=\underline{x}_i=0$), then $\hat{x}_i(\lambda) = p_i^*$. This is consistent with Lemma \ref{lem:lag_opt}. In particular, $\hat{x}_i(\lambda) = \argmin_{x_i \in \cX_i} \cL^r_i(x_i,\lambda)$ holds. 
	\end{rem}
		
	The dual problem of the relaxed problem \eqref{eq:prim_prob_relax} becomes\begin{subequations} \label{eq:prim_relax_dual}
		\begin{align}
			\max_{\lambda \in \bR}~&  g^r(\lambda) = \textstyle\sum_{i=1}^N g_i^r(\lambda)\\
			\mathrm{subject~to~}~& \lambda \geq 0.
		\end{align}
	\end{subequations}
	where $\lambda$ is constrained to be non-negative since \eqref{eq:prim_eq_relax} is an inequality constraint.
	
	In order to solve the constrained optimization problem such as \eqref{eq:prim_relax_dual} using continuous-time algorithms, discontinuous vector fields are often employed \cite{Yi2016,Cherukuri2016a} to constrain variables to the feasible set. 
	However, such methods are harder to implement due to the discontinuity. 
	Different from these approaches, we simply extend the domain of $\hat{x}_i(\lambda)$ as below:
	\begin{align} \label{eq:optimal_power_extend}
		\hat{x}_i(\lambda) := \begin{cases}
		\underline{x}_i &  \lambda \leq v_i(\underline{x}_i) \\
		v_i^{-1}(\lambda) & v_i(\underline{x}_i) < \lambda < v_i(\bar{x}) \\ 
		\bar{x}_i & v_i(\bar{x}_i) \leq \lambda
		\end{cases}
	\end{align}
	where we have defined $\hat{x}_i(\lambda) = \underline{x}_i$ for all $\lambda < 0$.
	Accordingly, define the \textit{modified dual function} as $g^m(\lambda) := \sum_{i=1}^N\cL_i^r(\hat{x}_i(\lambda),\lambda)$. Then, the following result holds.
	\begin{lem} \label{lem:modified_dual_prop}
		Suppose that Assumptions \ref{ass:basic} and \ref{ass:mono} hold.
		Then, the modified dual function given by 
		$$
		g^m(\lambda) = \textstyle\sum_{i=1}^{N} \cL_i^r(\hat{x}_i(\lambda),\lambda) =: \textstyle\sum_{i=1}^{N} g_i^m(\lambda)
		$$
		is $\cC^1$ and concave for all $\lambda \in \bR$.
	\end{lem}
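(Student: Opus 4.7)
\textbf{Proof plan for Lemma \ref{lem:modified_dual_prop}.}
Since $g^m(\lambda)=\sum_{i=1}^N g_i^m(\lambda)$ and sums of $\mathcal{C}^1$ concave functions are $\mathcal{C}^1$ and concave, it suffices to establish both properties for each component $g_i^m(\lambda)=\mathcal{L}_i^r(\hat{x}_i(\lambda),\lambda)=f_i(\hat{x}_i(\lambda))+\lambda(d_i-\hat{x}_i(\lambda)+\phi_i(\hat{x}_i(\lambda)))$. The natural strategy is to partition the real line according to the three branches of the extended piecewise definition \eqref{eq:optimal_power_extend} and handle each piece directly, then verify that the derivatives glue together at the two breakpoints.

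First I would compute $dg_i^m/d\lambda$ on each branch. On $\lambda\le v_i(\underline{x}_i)$ (which absorbs the artificial extension $\lambda<0$) we have $\hat{x}_i(\lambda)=\underline{x}_i$, so $g_i^m$ is affine in $\lambda$ with slope $d_i-\underline{x}_i+\phi_i(\underline{x}_i)$. Symmetrically, on $\lambda\ge v_i(\bar{x}_i)$ the function is affine with slope $d_i-\bar{x}_i+\phi_i(\bar{x}_i)$. On the interior branch $v_i(\underline{x}_i)<\lambda<v_i(\bar{x}_i)$ the interior optimality condition $v_i(\hat{x}_i(\lambda))=\lambda$ from Lemma \ref{lem:lag_opt} holds; differentiating $g_i^m$ directly and using this identity (an envelope-type calculation) collapses the $f_i'$ and $\lambda\phi_i'$ terms into
\begin{align*}
\frac{dg_i^m}{d\lambda}(\lambda)=d_i-\hat{x}_i(\lambda)+\phi_i(\hat{x}_i(\lambda)),
\end{align*}
so the same compact formula holds on all three branches.

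Next I would check $\mathcal{C}^1$ regularity at the two breakpoints. Because $v_i$ is strictly increasing and continuous (Lemma \ref{lem:lag_opt}, using $f_i,\phi_i\in\mathcal{C}^1$), its inverse is continuous, so $\hat{x}_i(\lambda)$ is continuous on $\mathbb{R}$; moreover $\hat{x}_i(v_i(\underline{x}_i))=\underline{x}_i$ and $\hat{x}_i(v_i(\bar{x}_i))=\bar{x}_i$. Plugging these in, the left and right expressions for $dg_i^m/d\lambda$ agree at both breakpoints, yielding continuity of the derivative, i.e.\ $g_i^m\in\mathcal{C}^1(\mathbb{R})$. On the middle branch the implicit function theorem applied to $v_i(\hat{x}_i(\lambda))=\lambda$ gives $\hat{x}_i'(\lambda)=1/v_i'(\hat{x}_i(\lambda))>0$, since strict monotonicity of $v_i$ was already shown.

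Finally, concavity follows by showing $dg_i^m/d\lambda$ is non-increasing on $\mathbb{R}$. It is constant on the two outer branches, so the only thing to check is the interior branch, where
\begin{align*}
\frac{d^2 g_i^m}{d\lambda^2}(\lambda)=\bigl(-1+\phi_i'(\hat{x}_i(\lambda))\bigr)\,\hat{x}_i'(\lambda).
\end{align*}
By Assumption \ref{ass:basic}, $-1+\phi_i'(\hat{x}_i(\lambda))<0$, and we just showed $\hat{x}_i'(\lambda)>0$, so the second derivative is strictly negative on the interior. Combined with the matching of one-sided derivatives at the two breakpoints established above, the slope is globally non-increasing, proving concavity of $g_i^m$ and hence of $g^m$. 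The main technical point to be careful about is the envelope-style simplification on the interior branch and its consistency with the constant pieces at the breakpoints; the extension to $\lambda<0$ contributes nothing beyond an affine tail, so it cannot break concavity or smoothness once the junction at $\lambda=v_i(\underline{x}_i)\ge 0$ is verified.
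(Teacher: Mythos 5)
Your plan (explicit branch-wise formula for $dg_i^m/d\lambda$, gluing at the two breakpoints, monotone derivative implies concavity) is a genuinely different and more self-contained route than the paper's, which for $\lambda>0$ simply invokes the Danskin-type result cited as Prop.~7.1.1 of Bertsekas (uniqueness of the minimizer over the compact set $\cX_i$ gives concavity and $\cC^1$ of the dual function) and then only inspects the affine tail for $\lambda\le 0$ and differentiability at $\lambda=0$. However, as written your interior-branch step has a genuine gap: Assumption \ref{ass:basic} only gives $f_i,\phi_i\in\cC^1$, so $v_i$ defined in \eqref{eq:lambda_first} is continuous and strictly increasing but need not be differentiable. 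Consequently the implicit function theorem does not apply, $\hat{x}_i'(\lambda)=1/v_i'(\hat{x}_i(\lambda))$ need not exist, and both your chain-rule (``envelope-type'') derivation of $dg_i^m/d\lambda$ on the middle branch and the second-derivative expression $\bigl(-1+\phi_i'(\hat{x}_i(\lambda))\bigr)\hat{x}_i'(\lambda)$ are unjustified unless you strengthen the smoothness hypotheses to $\cC^2$, which the lemma does not grant.

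The gap is reparable inside your framework. Differentiability of $g_i^m$ and the formula $dg_i^m/d\lambda = d_i-\hat{x}_i(\lambda)+\phi_i(\hat{x}_i(\lambda))$ follow without ever differentiating $\hat{x}_i$: either cite the same proposition the paper uses, or sandwich difference quotients, since $g_i^m(\mu)-g_i^m(\lambda)\le \cL_i^r(\hat{x}_i(\lambda),\mu)-\cL_i^r(\hat{x}_i(\lambda),\lambda)=(\mu-\lambda)\bigl(d_i-\hat{x}_i(\lambda)+\phi_i(\hat{x}_i(\lambda))\bigr)$ together with the symmetric bound at $\mu$, and then use continuity of $\hat{x}_i$ (which you correctly established from strict monotonicity and continuity of $v_i$). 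Concavity then needs no second derivative at all: $\hat{x}_i(\cdot)$ is non-decreasing and $z\mapsto z-\phi_i(z)$ is strictly increasing on $\cX_i$ by \eqref{ass:loss_slope}, so the derivative above is non-increasing in $\lambda$, and a $\cC^1$ function with non-increasing derivative is concave. With those two substitutions your argument is complete and arguably more elementary than the paper's citation-based proof; your observation that the extension to $\lambda<0$ is just an affine tail matching the constant slope $d_i-\underline{x}_i+\phi_i(\underline{x}_i)$ at $v_i(\underline{x}_i)\ge 0$ is exactly the point the paper also has to check.
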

	
	\begin{proof}
		For $\lambda > 0$, it follows from Lemma \ref{lem:lag_opt} that 
		$\hat{x}_i(\lambda) = \argmin_{x_i \in \cX_i} \cL^r_i(x_i,\lambda).$
		Therefore, \cite[Prop.7.1.1]{bert2016} states that $g^r(\lambda)$ is concave and $\cC^1$. Since $g^m(\lambda) = g^r(\lambda)$ for $\lambda > 0$, $g^m(\lambda)$ is also concave and $\cC^1$ for $\lambda > 0$.
		For $\lambda < 0$, it follows from the definition of $\hat{x}_i(\lambda)$ that
		$$
		g^m(\lambda) = \textstyle\sum_{i=1}^{N} f_i(\underline{x}_i) + \lambda(d_i + \phi_i(\underline{x}_i) - \underline{x}_i).
		$$
		Thus, it is obvious that $g^m(\lambda) \in \cC^1$ for $\lambda < 0$. 
		Moreover, it can be also verified that $g^m(\lambda)$ is differentiable at $\lambda = 0$.
		It is left to show $g^m(\lambda)$ is concave for $\lambda \leq 0$.
		However, this follows directly since $g^m(\lambda) \in \cC^1$ and it is a linear function of $\lambda$ for $\lambda \leq 0$.
	\end{proof}

	Now, instead of \eqref{eq:prim_relax_dual}, consider the unconstrained optimization problem
	\begin{align} \label{eq:prim_relax_dual_modified}
		\max_{\lambda \in \bR} g^m(\lambda) = \textstyle\sum_{i=1}^N g_i^m(\lambda)
	\end{align}
	with the gradient ascent algorithm given by 
	\begin{align} \label{eq:cen_dyn}
		\dot{\lambda} = \frac{dg^m(\lambda)}{d\lambda} = {\sum\nolimits_{i=1}^N} \frac{d g^m_i(\lambda)}{d \lambda}.
	\end{align}
	Using the result of \cite[Prop.7.1.1]{bert2016}, it can be verified that the derivative of $g_i^m(\lambda)$ becomes
	$$
	\frac{dg_i^m(\lambda)}{d\lambda} = d_i - \hat{x}_i(\lambda) + \phi_i(\hat{x}_i(\lambda)), \quad \forall \,i=1,\ldots,N.
	$$
	Note it follows from Assumption \ref{ass:basic} and \eqref{eq:optimal_power_extend} that $dg_i^m(\lambda)/d\lambda$ (and hence $dg^m(\lambda)/d\lambda$) is monotonically decreasing, uniformly bounded and uniformly continuous.

	Before presenting the centralized solution, recall the following result customized from \cite[Prop. 6.1.5]{bert2016}.
	
	\begin{lem}  \label{lem:opt_cond}
		Let $x^* := [x_1^*;\ldots;x_N^*] \in \bR^N$ and $\lambda^* \in \bR$. Then, the pair $(x^*,\lambda^*)$ satisfies
		\begin{subequations}
			\begin{gather*}
				x_i^*  \in \cX_i, \quad \lambda^* \geq 0, \quad \textstyle\sum_{i=1}^N d_i - x_i^* + \phi_i(x_i^*) \leq 0,\\
				\lambda^* \left( \textstyle\sum_{i=1}^N d_i - x_i^* + \phi_i(x_i^*) \right)  = 0, \\
				x^*  \in \argminl_{x \in \cX} \cL^r(x,\lambda^*), %\hskip0.26\textwidth \Diamond
			\end{gather*}        
		\end{subequations}
		if and only if $(x^*,\lambda^*)$ is an optimal solution-geometric multiplier pair of \eqref{eq:prim_prob_relax}.
	\end{lem}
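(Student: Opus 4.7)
The plan is to derive Lemma~\ref{lem:opt_cond} as a direct specialization of the saddle-point characterization in \cite[Prop.~6.1.5]{bert2016} applied to the convex program \eqref{eq:prim_prob_relax} written in single-inequality form, with abstract constraint set $\cX$. The hypotheses of that proposition are already in place: \eqref{eq:prim_prob_relax} is convex under Assumption~\ref{ass:basic}, $\cL^r(x,\lambda)$ is the Lagrangian associated with the single inequality \eqref{eq:prim_eq_relax}, and the four listed conditions are exactly the componentwise saddle-point conditions. The argument then splits into the two standard implications.

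For the ``only if'' direction I would unpack the definition of a geometric multiplier. Primal feasibility of $x^*$ and $\lambda^* \geq 0$ are immediate from what it means to be an optimal solution-geometric multiplier pair for an inequality-constrained program. The defining identity $\sum_{i=1}^N f_i(x_i^*) = \inf_{x \in \cX} \cL^r(x,\lambda^*)$, combined with primal feasibility of $x^*$ and $\lambda^* \geq 0$, forces $\cL^r(x^*,\lambda^*) = \sum_{i=1}^N f_i(x_i^*)$, which simultaneously identifies $x^*$ as a minimizer of $\cL^r(\cdot,\lambda^*)$ over $\cX$ and, after isolating the penalty term $\lambda^*\sum_{i=1}^N (d_i - x_i^* + \phi_i(x_i^*))$, yields complementary slackness.

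For the converse I would run the standard weak-duality chain. For any primal-feasible $x \in \cX$,
\begin{align*}
\textstyle\sum_{i=1}^N f_i(x_i) \;\geq\; \cL^r(x,\lambda^*) \;\geq\; \cL^r(x^*,\lambda^*) \;=\; \textstyle\sum_{i=1}^N f_i(x_i^*),
\end{align*}
using $\lambda^* \geq 0$ together with feasibility of $x$ for the first inequality, the Lagrangian-minimization hypothesis for the second, and complementary slackness for the equality. Hence $x^*$ is primal optimal, and the same chain shows that $g^r(\lambda^*)$ equals the primal optimum, so $\lambda^*$ qualifies as a geometric multiplier. I do not expect any real obstacle: the content is a bookkeeping translation of a textbook saddle-point result into the present notation, with the mild care that convexity of \eqref{eq:prim_prob_relax} is needed only in the ``only if'' direction (to ensure the geometric multiplier exists with zero duality gap), while sufficiency in the converse uses only the structure of $\cL^r$ and the sign constraint on $\lambda^*$.
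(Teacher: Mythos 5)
Your proof is correct and takes essentially the same route as the paper, which states the lemma as a customization of \cite[Prop.~6.1.5]{bert2016} without proof; your write-up simply fills in the standard geometric-multiplier/weak-duality argument behind that citation. One minor remark: the characterization holds without any convexity in either direction (your own ``only if'' argument never actually uses it), so the closing aside that convexity is needed there is superfluous.
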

	
	Now, it will be shown that an optimal solution of \eqref{eq:prim_prob} can be obtained from \eqref{eq:cen_dyn}.
	
	\begin{thm} \label{thm:opt}
		Suppose that Assumptions \ref{ass:basic} and \ref{ass:mono} hold and that the optimization problem \eqref{eq:prim_prob} is feasible (i.e., \eqref{eq:feas_cond} holds). Consider the gradient ascent algorithm given by \eqref{eq:cen_dyn}.
		Then, $\lim_{t \rightarrow \infty}\lambda(t) = \lambda^*$ where $\lambda^*$ is an optimal solution of \eqref{eq:prim_relax_dual_modified}.
		Moreover, $\hat{x}_i(\lambda^*)$ is an optimal solution of the optimization problem \eqref{eq:prim_prob}.
	\end{thm}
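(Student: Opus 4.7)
The plan is to exploit the structure established in the earlier lemmas: show that the modified dual problem \eqref{eq:prim_relax_dual_modified} admits a maximizer $\lambda^* \geq 0$, prove that the one-dimensional gradient flow \eqref{eq:cen_dyn} converges to such a maximizer, and then invoke Lemma \ref{lem:opt_cond} to certify that $(\hat{x}(\lambda^*),\lambda^*)$ is an optimal solution-multiplier pair of the relaxed problem \eqref{eq:prim_prob_relax}. Crucially, the KKT conditions will force the relaxed inequality \eqref{eq:prim_eq_relax} to hold with equality at $\hat{x}(\lambda^*)$, so the recovered point is feasible for the original non-convex problem \eqref{eq:prim_prob}, and hence optimal for it since \eqref{eq:prim_prob} and \eqref{eq:prim_prob_relax} share the same objective and the feasible set of \eqref{eq:prim_prob} is contained in that of \eqref{eq:prim_prob_relax}.

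First I would establish existence of a maximizer. Using the derivative $dg^m/d\lambda=\sum_i(d_i-\hat{x}_i(\lambda)+\phi_i(\hat{x}_i(\lambda)))$ together with the explicit piecewise definition \eqref{eq:optimal_power_extend}, for $\lambda\leq 0$ every $\hat{x}_i(\lambda)=\underline{x}_i$, so the gradient equals $\sum_i(d_i-\underline{x}_i+\phi_i(\underline{x}_i))\geq 0$ by the lower half of \eqref{eq:feas_cond}; symmetrically, for $\lambda\geq\max_i v_i(\bar{x}_i)$ the gradient equals $\sum_i(d_i-\bar{x}_i+\phi_i(\bar{x}_i))\leq 0$ by the upper half of \eqref{eq:feas_cond}. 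Since $g^m\in\cC^1$ and is concave by Lemma \ref{lem:modified_dual_prop}, its derivative is continuous and monotonically non-increasing, so the zero set (and hence the argmax) is a non-empty closed interval $[\lambda^*_-,\lambda^*_+]$ contained in $[0,\infty)$. For the convergence itself I would pick any $\lambda^*$ in this interval and use the Lyapunov candidate $V(\lambda)=\tfrac{1}{2}(\lambda-\lambda^*)^2$; concavity of $g^m$ together with $dg^m(\lambda^*)/d\lambda=0$ yields $\dot V=(\lambda-\lambda^*)\,dg^m/d\lambda\leq 0$, so $\lambda(t)$ is bounded. Since the dynamics is one-dimensional with a continuous, monotonically non-increasing right-hand side, the trajectory is monotone and must converge to a limit $\lambda_\infty$ where $dg^m/d\lambda$ vanishes, i.e.\ $\lambda_\infty\in[\lambda^*_-,\lambda^*_+]\subset[0,\infty)$.

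Finally, setting $x^*=\hat{x}(\lambda_\infty)$ and $\lambda^*=\lambda_\infty$, I would verify the hypotheses of Lemma \ref{lem:opt_cond}: $x_i^*\in\cX_i$ holds by construction of $\hat{x}_i$; $\lambda^*\geq 0$ was just shown; $\sum_i(d_i-x_i^*+\phi_i(x_i^*))=dg^m/d\lambda|_{\lambda^*}=0$ gives both primal feasibility in the relaxed problem and complementary slackness automatically; and Lemma \ref{lem:lag_opt} guarantees $x^*\in\argmin_{x\in\cX}\cL^r(x,\lambda^*)$ because $\lambda^*\geq 0$. Hence $x^*$ solves \eqref{eq:prim_prob_relax}, and because the third condition holds with equality, $x^*$ satisfies the original equality constraint \eqref{eq:prim_eq}, making it feasible and therefore optimal for \eqref{eq:prim_prob}. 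The main obstacle I anticipate is the convergence step: $g^m$ need not be strictly concave (it is affine on $(-\infty,0]$ and on the region where every $\hat{x}_i$ saturates at $\bar{x}_i$), so the argmax can be a non-degenerate interval; proving that the flow selects a single point in this interval rather than drifting requires careful use of the monotonicity of the 1-D vector field (or, equivalently, a LaSalle argument combined with the fact that $\lambda(t)$ is monotone once it leaves the argmax).
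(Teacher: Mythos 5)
Your overall route is the same as the paper's: show the scalar gradient flow for the concave, $\cC^1$ function $g^m$ converges to a maximizer (using \eqref{eq:feas_cond} and \eqref{eq:optimal_power_extend} to bound the trajectory), then verify the conditions of Lemma \ref{lem:opt_cond} for the pair $(\hat{x}(\lambda^*),\lambda^*)$, and use the fact that the relaxed constraint \eqref{eq:prim_eq_relax} holds with equality at $\hat{x}(\lambda^*)$ to transfer optimality from \eqref{eq:prim_prob_relax} back to the non-convex problem \eqref{eq:prim_prob}. Your convergence argument is in fact more explicit than the paper's one-line claim, and it is sound.

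There is, however, one genuine gap: your assertion that the zero set of $dg^m/d\lambda$ (hence the argmax of \eqref{eq:prim_relax_dual_modified}) is contained in $[0,\infty)$, which you later use as ``$\lambda^*\geq 0$ was just shown.'' The feasibility condition \eqref{eq:feas_cond} allows equality, so in the boundary case $\sum_i d_i = \sum_i \underline{x}_i - \phi_i(\underline{x}_i)$ you have $dg^m/d\lambda \equiv 0$ on all of $(-\infty,c]$ for some $c>0$ (since $\hat{x}_i(\lambda)=\underline{x}_i$ there), so the argmax contains negative values; if $\lambda(0)<0$ the trajectory never moves and $\lambda_\infty=\lambda(0)<0$. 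Then the hypothesis $\lambda^*\geq 0$ of Lemma \ref{lem:opt_cond} fails for $\lambda^*=\lambda_\infty$, and Lemma \ref{lem:lag_opt} (stated for $\lambda\geq 0$) no longer directly gives $x^*\in\argmin_{x\in\cX}\cL^r(x,\lambda^*)$. The paper closes exactly this case: when $\lambda^*<0$, note $\hat{x}(\lambda^*)=\hat{x}(0)=\underline{x}$ and verify Lemma \ref{lem:opt_cond} with the multiplier $0$ instead, i.e., $(\hat{x}(0),0)$ is an optimal solution--geometric multiplier pair, so $\hat{x}(\lambda^*)$ is still optimal. A symmetric degenerate case at the upper bound makes the argmax unbounded above, which only invalidates your phrasing ``closed interval $[\lambda^*_-,\lambda^*_+]$'' but not the convergence or optimality argument. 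With the $\lambda^*<0$ branch added, your proof matches the paper's.
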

	\begin{proof}
		From Lemma \ref{lem:modified_dual_prop}, it follows that \eqref{eq:cen_dyn} is a gradient ascent algorithm for the concave function $g^m(\lambda)$.
		Hence, it can be easily shown that $\lambda(t)$ converges to an optimal solution of \eqref{eq:prim_relax_dual_modified} using \eqref{eq:feas_cond} and \eqref{eq:optimal_power_extend}.
		
		It is left to show $\hat{x}(\lambda^*) := [\hat{x}_1(\lambda^*);\ldots;\hat{x}_N(\lambda^*)] \in \bR^N$ is an optimal solution to the problem \eqref{eq:prim_prob}. 
		For this, it will be shown that the pair $(\hat{x}(\lambda^*),\lambda^*)$ satisfies the optimality conditions for the problem \eqref{eq:prim_prob_relax} provided in Lemma \ref{lem:opt_cond} while inequality constraint \eqref{eq:prim_eq_relax} is satisfied with an equality.
		
		From the first order optimality condition for \eqref{eq:prim_relax_dual_modified}, it follows that 
		\begin{align} \label{eq:thm1_feasibility}
			\textstyle\sum_{i=1}^{N} d_i - \hat{x}_i(\lambda^*) + \phi_i(\hat{x}_i(\lambda^*)) = 0.
		\end{align}
		Moreover, $\hat{x}_i(\lambda^*) \in \cX_i$ by the definition. 
		Therefore, $\hat{x}_i(\lambda^*)$ is a feasible solution to \eqref{eq:prim_prob_relax}. 
		Now, consider the case when $\lambda^* \geq 0$. Then, it follows from Lemma \ref{lem:lag_opt} that $\hat{x}_i(\lambda^*) = \argmin_{x_i\in\cX_i} \cL^r_i(x_i,\lambda^*)$. 
		Therefore, $\hat{x}_i(\lambda^*)$ is an optimal solution of \eqref{eq:prim_prob_relax} due to Lemma \ref{lem:opt_cond}. 
		If $\lambda^* < 0$, then it follows from \eqref{eq:optimal_power_extend} that
		$$
		\hat{x}_i(\lambda^*) = \hat{x}_i(0) = \textstyle\argmin_{x_i \in \cX_i} \cL^r(x_i,0).
		$$
		Hence, we can conclude $(\hat{x}(0),0)$ is an optimal solution-geometric multiplier pair of \eqref{eq:prim_prob_relax} using Lemma \ref{lem:opt_cond}.
		Consequently, $\hat{x}_i(\lambda^*)$ is an optimal solution to \eqref{eq:prim_prob_relax}. 
		
		Finally, it follows from \eqref{eq:thm1_feasibility} that $\hat{x}_i(\lambda^*)$ satisfies constraint \eqref{eq:prim_eq_relax} with an equality. Therefore, $\hat{x}_i(\lambda^*)$ is an optimal solution of the problem \eqref{eq:prim_prob}.
	\end{proof}
	
	From Theorem \ref{thm:opt}, it can be seen that the optimization problem can be solved using \eqref{eq:cen_dyn}. In particular, the optimal generation for each node is obtained using \eqref{eq:optimal_power_extend}.
	
	\section{A Distributed Solution} \label{sec:dist}
	In this section, a distributed algorithm for solving the dual problem \eqref{eq:prim_relax_dual_modified} (and hence the primal problem \eqref{eq:prim_prob}) is proposed.
	Suppose that the node $i$ runs 
	\begin{subequations} \label{eq:dist_dyn}
		\begin{align} 
			\dot{\lambda}_i(t) & = \frac{d g^m_i}{d\lambda}(\lambda_i(t)) + k \sum_{j \in \cN_i} (\lambda_j(t) - \lambda_i(t)) \label{eq:dist_dyn_algo}\\
			x_i(t) & = \hat{x}_i(\lambda_i(t)) \label{eq:dist_dyn_power}
		\end{align}
	\end{subequations}
	where $\lambda_i \in \bR$ is the estimate of the dual variable by the node $i$, $x_i(t)$ is the power generation of the  node $i$ (at time $t$), $k > 0$ is the coupling gain and $\cN_i := \{j \in \cN \mid (j,i) \in \cE \}$.
	The proposed algorithm \eqref{eq:dist_dyn} is an extension of \cite{Yun2019} to the EDP with power losses.
	For the distributed algorithm, we make the following assumption.
	\begin{asm} \label{ass:graph}
		Communication graph is undirected and connected.
	\end{asm}
	
	Note \eqref{eq:dist_dyn} is a distributed algorithm as $dg^m_i/d\lambda$ can be computed by the node $i$ only using the local information.
	Moreover, only the estimate of the dual variable is communicated between agents and no private information such as $d_i$ or $f_i(\cdot)$ are exchanged.
	
	Let $\bm{\lambda} := [\lambda_1;\ldots;\lambda_N] \in \bR^N$ be the stack of $\lambda_i$ and $G(\bm{\lambda}) := [dg^m_1(\lambda_1)/d\lambda;\ldots;dg^m_N(\lambda_N)/d\lambda]$. 
	Then \eqref{eq:dist_dyn_algo} can be written as 
	\begin{equation} \label{eq:thm_dyn}
		\dot{\bm{\lambda}} = G(\bm{\lambda}) - kL\bm{\lambda}
	\end{equation}
	where $L \in \bR^{N \times N}$ is a symmetric Laplacian matrix. Denoting $1_N := [1;\ldots;1] \in \bR^N$, it follows that there exists a matrix $W = [(1/N) 1_N^T; R^T] \in \bR^{N \times N}$ and $W^{-1} = [1_N, Q]$ such that $WLW^{-1} = \mathrm{diag}(0,\sigma_2(L),\ldots,\sigma_N(L))$ where $R \in \bR^{N \times (N-1)}$ and $Q \in \bR^{N \times (N-1)}$.
	Moreover, it can be checked that $|Q| = \sqrt{N}$ and $|R| = 1/\sqrt{N}$ \cite{Kim2016}. 
	
	Now apply the following coordinate transformation
	\begin{align} \label{eq:coord_trans}
		\xi := \begin{bmatrix}
		\bar{\xi} \\ \tilde{\xi}
		\end{bmatrix}
		=
		W\bm{\lambda} = \begin{bmatrix}
		\frac{1}{N} 1_N^T \\ R^T
		\end{bmatrix}
		\bm{\lambda}
	\end{align}
	where $\bar{\xi} \in \bR$ and $\tilde{\xi} \in \bR^{N-1}$. 
	In addition, it follows that $\bm{\lambda} = W^{-1}\xi$, or $\lambda_i = \bar{\xi} + Q_i \tilde{\xi}$ where $Q_i$ is the $i$-th row of $Q$. Then the system \eqref{eq:thm_dyn} is transformed into
	\begin{subequations} \label{eq:dyn_trans}
		\begin{align} 
			\dot{\bar{\xi}} & = \frac{1}{N} \sum_{i=1}^N \frac{dg_i^m}{d\lambda}(\bar{\xi}) + \frac{1}{N} \tilde{g}\big(\bar{\xi},\tilde{\xi}\big), \label{eq:dyn_trans_1}  \\
			\dot{\tilde{\xi}} & = -kR^TLQ \tilde{\xi}+ R^T G\left(1_N\bar{\xi} + Q\tilde{\xi}\,\right), \label{eq:dyn_trans_2}
		\end{align} 
	\end{subequations}
	where $\xi(0) = W \bm{\lambda}(0)$ and $\tilde{g}(\bar{\xi},\tilde{\xi}) := \sum_{i=1}^N ({dg_i^m}/{d\lambda})(\bar{\xi} + Q_i \tilde{\xi}) - ({dg_i^m}/{d\lambda})(\bar{\xi})$.
	Convergence of the proposed algorithm is stated below.
	
	\begin{thm} \label{thm:feas}
		Consider the distributed algorithm \eqref{eq:dist_dyn}.
		Suppose that Assumptions \ref{ass:basic}, \ref{ass:mono}, and \ref{ass:graph} hold. 
		Also assume that the optimization problem \eqref{eq:prim_prob} is feasible (i.e., \eqref{eq:feas_cond} holds).
		Then, for any $k > 0$, the solution of \eqref{eq:dist_dyn} converges to a point and satisfies
		\begin{align*}
			\lim\limits_{t \rightarrow \infty} \textstyle\sum_{i=1}^N x_i(\lambda_i(t))  - \phi_i\big(x_i(\lambda_i(t)) \big) = \textstyle\sum_{i=1}^N d_i
		\end{align*}
		for any initial conditions $\lambda_i(0) \in \bR$.
	\end{thm}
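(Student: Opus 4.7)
The plan is a Lyapunov-plus-LaSalle argument anchored at an equilibrium of \eqref{eq:thm_dyn}. I first establish the existence of an equilibrium $\bm{\lambda}^* \in \bR^N$ satisfying $G(\bm{\lambda}^*) = kL\bm{\lambda}^*$. Consider
\[
V_0(\bm{\lambda}) := -\sum_{i=1}^N g_i^m(\lambda_i) + \frac{k}{2}\bm{\lambda}^T L \bm{\lambda}.
\]
By Lemma~\ref{lem:modified_dual_prop}, each $g_i^m$ is $\cC^1$ and concave, so $V_0$ is $\cC^1$ and convex with $\nabla V_0(\bm{\lambda}) = -G(\bm{\lambda}) + kL\bm{\lambda}$. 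I would verify that $V_0$ is radially unbounded: along directions orthogonal to $1_N$ the Laplacian quadratic dominates (Assumption~\ref{ass:graph} ensures $\sigma_2(L) > 0$), while along $1_N$ the asymptotic slopes of $-\sum_i g_i^m(c)$ as $c \to \pm\infty$ have the correct sign thanks to the feasibility bounds \eqref{eq:feas_cond}. Hence $V_0$ attains its minimum at some $\bm{\lambda}^*$, and the vanishing of its gradient there is precisely the equilibrium condition.

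With $\bm{\lambda}^*$ in hand, take the Lyapunov candidate $V(\bm{\lambda}) := \tfrac{1}{2}|\bm{\lambda} - \bm{\lambda}^*|^2$. Differentiating along \eqref{eq:thm_dyn} and subtracting the zero quantity $(\bm{\lambda}-\bm{\lambda}^*)^T(G(\bm{\lambda}^*)-kL\bm{\lambda}^*)$,
\[
\dot V = \sum_{i=1}^N (\lambda_i - \lambda_i^*)\bigl(G_i(\lambda_i) - G_i(\lambda_i^*)\bigr) - k(\bm{\lambda}-\bm{\lambda}^*)^T L(\bm{\lambda}-\bm{\lambda}^*) \leq 0,
\]
since each $G_i = dg_i^m/d\lambda$ is monotonically decreasing (as recorded after \eqref{eq:cen_dyn}) and $L \succeq 0$. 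In particular $\bm{\lambda}(t)$ remains bounded and solutions exist for all $t \geq 0$.

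LaSalle's invariance principle then sends trajectories to the largest invariant set in $\{\dot V = 0\}$. Both nonpositive terms must vanish there: the Laplacian term forces $\bm{\lambda} - \bm{\lambda}^* \in \ker(L) = \mathrm{span}(1_N)$ (by connectedness, Assumption~\ref{ass:graph}), and the monotonicity term forces $G_i(\lambda_i) = G_i(\lambda_i^*)$ for every $i$; together these yield $G(\bm{\lambda}) - kL\bm{\lambda} = 0$, so every point of the invariant set is itself an equilibrium of the form $\bm{\lambda}^* + c \cdot 1_N$. To upgrade set-convergence to point-convergence, note that $V(\bm{\lambda}(t))$ decreases to some $V_\infty \geq 0$, so the $\omega$-limit set sits inside the sphere $\{V = V_\infty\}$ intersected with the line $\bm{\lambda}^* + \bR \cdot 1_N$, a set of at most two points; connectedness of the $\omega$-limit set of a bounded trajectory then forces convergence to a single $\bm{\lambda}^\infty$.

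Finally, since $\bm{\lambda}^\infty$ is an equilibrium, left-multiplying $G(\bm{\lambda}^\infty) = kL\bm{\lambda}^\infty$ by $1_N^T$ and using $1_N^T L = 0$ gives $\sum_{i=1}^N \bigl(d_i - \hat{x}_i(\lambda_i^\infty) + \phi_i(\hat{x}_i(\lambda_i^\infty))\bigr) = 0$, which, combined with the continuity of $\hat{x}_i$ in \eqref{eq:optimal_power_extend} and $x_i(t) = \hat{x}_i(\lambda_i(t))$, rearranges to the claimed identity. I expect the coercivity verification in the first paragraph to be the trickiest step: if \eqref{eq:feas_cond} holds with equality on one side then $V_0$ can fail to be coercive along $1_N$, and that degenerate boundary case likely requires a separate perturbation-and-limit argument or a direct construction of $\bm{\lambda}^*$.
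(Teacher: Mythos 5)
Your proof is essentially correct on the generic case and takes a genuinely different route from the paper. The paper uses the function you call $V_0$ directly as the LaSalle function for \eqref{eq:thm_dyn} (its derivative along trajectories is $-|G(\bm{\lambda})-kL\bm{\lambda}|^2\le 0$), obtains boundedness separately from the transformed dynamics ($\tilde{\xi}$ bounded because $R^TLQ\succ 0$ and $G$ is bounded, $\bar{\xi}$ bounded via \eqref{eq:feas_cond} and \eqref{eq:optimal_power_extend}), and then cites a lemma of \cite{Cherukuri2017} to upgrade convergence to the equilibrium set into convergence to a point. Your equilibrium-centered quadratic Lyapunov function, together with the observation that the $\omega$-limit set is connected and lies in the intersection of a sphere with the line $\bm{\lambda}^*+\bR\,1_N$ (at most two points), is a clean, self-contained substitute for that citation; the identification of the invariant set and the final step of multiplying the equilibrium relation by $1_N^T$ coincide with the paper.

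The genuine gap is the one you flag: your construction of $\bm{\lambda}^*$ by minimizing $V_0$ fails when \eqref{eq:feas_cond} holds with equality on one side, a case the theorem's hypothesis allows, and your whole Lyapunov function presupposes that equilibrium. The gap is real but easily closed by the direct construction you allude to: if $\sum_{i=1}^N \bar{x}_i-\phi_i(\bar{x}_i)=\sum_{i=1}^N d_i$, set $G_\infty:=[d_1-\bar{x}_1+\phi_1(\bar{x}_1);\ldots;d_N-\bar{x}_N+\phi_N(\bar{x}_N)]$; then $1_N^TG_\infty=0$, and since the graph is connected and $L$ is symmetric, $\im(L)=1_N^\perp$, so $kLw=G_\infty$ has a solution $w$. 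Taking $\bm{\lambda}^*:=w+c\,1_N$ with $c$ large enough that $\lambda_i^*\ge v_i(\bar{x}_i)$ for all $i$ gives, by \eqref{eq:optimal_power_extend}, $G(\bm{\lambda}^*)=G_\infty=kL\bm{\lambda}^*$, i.e., an equilibrium; the lower boundary case $\sum_{i=1}^N \underline{x}_i-\phi_i(\underline{x}_i)=\sum_{i=1}^N d_i$ is symmetric. It is worth noting that the paper's arrangement sidesteps this issue altogether: because its Lyapunov function does not reference an equilibrium, boundedness plus LaSalle \emph{produce} an equilibrium as an $\omega$-limit point rather than requiring its existence in advance. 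With the boundary case patched as above, your argument is complete.
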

	
	\begin{proof}
		From \eqref{eq:dyn_trans_2}, it follows that $\tilde{\xi}(t)$ is bounded since $R^TLQ$ is positive definite and $G( \cdot )$ is bounded.
		It can also be verified that $\bar{\xi}(t)$ is bounded using \eqref{eq:feas_cond} and \eqref{eq:optimal_power_extend}.
		Since \eqref{eq:coord_trans} is a linear transformation, it follows that the solution of \eqref{eq:dist_dyn} is bounded.
		
		Now, let $V(\bm{\lambda}) = -\sum_{i=1}^{N} g^m_i(\lambda_i) + (k/2) \bm{\lambda}^T L \bm{\lambda}$ be a candidate function.
		Then its time derivative becomes
		\begin{align*}
			\dot{V} =  \left( - G(\bm{\lambda})+ k L \bm{\lambda} \right)^T \dot{\bm{\lambda}}
			= -\left|G(\bm{\lambda}) - kL\bm{\lambda}\right|^2 \leq 0.
		\end{align*}
		Thus, LaSalle's invariance principle can be applied to conclude that $\lambda_i(t)$ approaches to the set $E := \{\bm{\lambda} \mid \dot{V}(\bm{\lambda})=0\}$.
		However, note that the set $E$ is the set of equilibrium points of \eqref{eq:thm_dyn}.
		Thus, convergence to a point can be obtained by applying Lemma A.3 from \cite{Cherukuri2017}.
		
		For the feasibility of the converged solution, let $\hat{\lambda} = [\hat{\lambda}_1;\ldots;\hat{\lambda}_N]:= \lim_{t \rightarrow \infty} \bm{\lambda}(t) \in E$.
		Then, it holds that
		$
		G(\hat{\lambda}) - kL\hat{\lambda} = 0.
		$
		Multiplying $1_N^T$ from the left, we obtain
		\begin{align*}
			1_N^T G(\hat{\lambda}) = \textstyle\sum_{i=1}^N d_i - \hat{x}_i(\hat{\lambda}_i) + \phi_i\big(\hat{x}_i(\hat{\lambda}_i)\big)  = 0
		\end{align*}
		since $1_N^TL=0$.
	\end{proof}
	
	Result of Theorem \ref{thm:opt} states that for any $k>0$, the algorithm \eqref{eq:dist_dyn} converges to a \textit{feasible} solution of \eqref{eq:prim_prob}.
	However, the optimality of the converged solution has not been stated.
	In what follows, it is shown that the optimality can be recovered using high coupling gain.
	\begin{thm} \label{thm:dist_opt}
		Consider the distributed algorithm \eqref{eq:dist_dyn} and suppose that the assumptions of Theorem \ref{thm:feas} hold.
		Then, for any $\epsilon >0$, there exists $\bar{k} >0$ and a function $T(\bm{\lambda}(0),k)$ such that for all $k > \bar{k}$, it holds that 
		\begin{align*}
			|\hat{x}_i(\lambda_i(t)) - \hat{x}_i(\lambda^*)| \leq \epsilon, \quad \forall i \in \cN, \quad \forall t \geq T(\bm{\lambda}(0),k),
		\end{align*}
		where $\hat{x}_i(\lambda^*)$ is an optimal solution of \eqref{eq:prim_prob}.
	\end{thm}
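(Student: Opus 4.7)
The plan is to exploit the singular-perturbation structure already made explicit in the transformed system \eqref{eq:dyn_trans}. With $k$ large, the $\tilde{\xi}$-equation is fast (its linear part is $-kR^TLQ\tilde{\xi}$ with $R^TLQ$ positive definite by Assumption \ref{ass:graph}), while the $\bar{\xi}$-equation is slow and, when $\tilde{\xi}\approx 0$, reduces to $\dot{\bar{\xi}} = \tfrac{1}{N}\tfrac{dg^m}{d\lambda}(\bar{\xi})$, i.e., a positive-scalar multiple of the centralized dynamics \eqref{eq:cen_dyn}. By Theorem \ref{thm:opt}, this reduced slow system globally converges to an optimal dual $\lambda^*$ of \eqref{eq:prim_relax_dual_modified} and $\hat{x}_i(\lambda^*)$ solves \eqref{eq:prim_prob}.

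First I would establish the boundary-layer bound. Since $G(\cdot)$ is uniformly bounded (as noted immediately after \eqref{eq:cen_dyn}) and $R^TLQ$ is positive definite, a standard Lyapunov argument on $\tfrac{1}{2}|\tilde{\xi}|^2$ for \eqref{eq:dyn_trans_2} yields that for any $\delta>0$ there exists $\bar{k}_1>0$ and a time $T_1(\bm{\lambda}(0),k)=O((\log k)/k)$ such that $|\tilde{\xi}(t)|\leq \delta$ for all $t\geq T_1$ and all $k\geq \bar{k}_1$; in fact an $O(1/k)$ ultimate bound holds. This relies only on boundedness of $G$ and does not require $\bar{\xi}$ to be in a particular region.

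Next I would analyze the slow dynamics on $t\geq T_1$. Using uniform continuity of each $dg_i^m/d\lambda$ (again noted after \eqref{eq:cen_dyn}), the perturbation term $\tfrac{1}{N}\tilde{g}(\bar{\xi},\tilde{\xi})$ in \eqref{eq:dyn_trans_1} satisfies $|\tilde{g}(\bar{\xi},\tilde{\xi})|\to 0$ uniformly in $\bar{\xi}$ as $|\tilde{\xi}|\to 0$. Consequently \eqref{eq:dyn_trans_1} is a vanishing perturbation of the centralized gradient ascent. Because $g^m$ is concave and $\mathcal{C}^1$ (Lemma \ref{lem:modified_dual_prop}) with the set of maximizers containing $\lambda^*$ (Theorem \ref{thm:opt}), a Lyapunov function of the form $V_s(\bar{\xi}) := g^m(\lambda^*)-g^m(\bar{\xi})\geq 0$ together with feasibility \eqref{eq:feas_cond} (which makes $dg^m/d\lambda$ strictly negative for large $\bar\xi$ and positive for very negative $\bar\xi$) yields practical convergence of $\bar{\xi}$ to any prescribed neighborhood of $\lambda^*$ after a time $T_2(\bm{\lambda}(0),k)$ provided $|\tilde{\xi}|$ is small enough, which in turn is achieved by choosing $k$ large.

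Finally I would assemble the three pieces: for given $\epsilon>0$, choose $\delta_\lambda>0$ so that $|\lambda-\lambda^*|\leq \delta_\lambda$ implies $|\hat{x}_i(\lambda)-\hat{x}_i(\lambda^*)|\leq \epsilon$ for all $i$, using continuity of $\hat{x}_i$ which is immediate from \eqref{eq:optimal_power_extend} and the strict monotonicity of $v_i$ in Lemma \ref{lem:lag_opt}. Then pick $\delta$ in the first step and $\bar{k}\geq \bar{k}_1$ in the second so that $|\bar{\xi}(t)-\lambda^*|+|Q_i||\tilde{\xi}(t)|\leq \delta_\lambda$ for $t\geq T:=T_1+T_2$; mapping back via $\lambda_i=\bar{\xi}+Q_i\tilde{\xi}$ gives the stated bound. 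The main obstacle I foresee is that the slow subsystem is only asymptotically (not exponentially) stable and $g^m$ need not be strictly concave on segments where some $\hat{x}_i$ saturates, so a textbook Tikhonov theorem does not directly apply; the fix is to rely on the input-to-state style Lyapunov argument above, using that $V_s$ together with feasibility provides radial unboundedness and that LaSalle-type reasoning (as in Theorem \ref{thm:feas}) identifies the invariant set where the slow perturbation vanishes.
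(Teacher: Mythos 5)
Your two--time--scale skeleton (fast $\tilde{\xi}$ via a Lyapunov bound on $\tfrac12|\tilde{\xi}|^2$ with an $O(1/k)$ ultimate bound, then the slow $\bar{\xi}$-dynamics as a perturbed gradient ascent) is exactly the structure of the paper's proof, and your boundary-layer step and the uniform smallness of $\tilde{g}$ are fine. The gap is in the last step: you measure success by \emph{dual} proximity, requiring $|\bar{\xi}(t)-\lambda^*|+|Q_i||\tilde{\xi}(t)|\leq\delta_\lambda$ for the \emph{specific} maximizer $\lambda^*$, and then invoke continuity of $\hat{x}_i$. But $g^m$ is only concave, not strictly concave: whenever all $\hat{x}_i$ saturate, $dg^m/d\lambda$ is constant, so the maximizer set can be a nondegenerate interval (even unbounded if \eqref{eq:feas_cond} holds with equality, in which case your $V_s$ is not radially unbounded either). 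The perturbed slow dynamics only drive $\bar{\xi}$ into a region where $|dg^m/d\lambda|$ is small, i.e.\ toward the maximizer \emph{set}; nothing forces $\bar{\xi}$ near the particular $\lambda^*$, so the implication ``$|\lambda_i-\lambda^*|\leq\delta_\lambda \Rightarrow |\hat{x}_i(\lambda_i)-\hat{x}_i(\lambda^*)|\leq\epsilon$'' cannot be applied as stated. Your own flagged fix (LaSalle on the slow subsystem) does not repair this: the slow subsystem is non-autonomous due to $\tilde{g}$, and LaSalle would in any case not select which point of a flat maximizer interval is approached; it also does not by itself yield the explicit finite time $T(\bm{\lambda}(0),k)$ the theorem requires.

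The paper sidesteps the dual-distance issue entirely by working on the primal side: with $p_i(z):=z-\phi_i(z)$ strictly increasing, each $p_i(\hat{x}_i(\cdot))$ is nondecreasing in $\lambda$, so the deviations $p_j(\hat{x}_j(\lambda_i))-p_j(\hat{x}_j(\lambda^*))$ all share one sign and
$\sum_j |p_j(\hat{x}_j(\lambda_i))-p_j(\hat{x}_j(\lambda^*))| = \big|\sum_j p_j(\hat{x}_j(\lambda_i)) - \sum_j d_j\big| = |dg^m(\lambda_i)/d\lambda|$;
hence smallness of the gradient at $\lambda_i$ (obtained in finite time because outside the level set $\{|dg^m/d\lambda|\leq 2\delta_1/3\}$ the drift of $\bar{\xi}$ is at least $\delta_1/(3N)$, and $|\dot{\bar{\xi}}|\leq\Delta/N$ bounds $|\bar{\xi}(T_1)|$) directly gives $|\hat{x}_i(\lambda_i)-\hat{x}_i(\lambda^*)|\leq\epsilon$, with no reference to $|\lambda_i-\lambda^*|$. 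If you want to keep your route, the patch is to (i) target a neighborhood of the maximizer set rather than of $\lambda^*$, with entry in finite time because the monotone decreasing $dg^m/d\lambda$ is bounded away from zero outside that neighborhood, and (ii) prove that $\hat{x}_i$ is constant on the maximizer set (the sum $\sum_j p_j(\hat{x}_j(\lambda))$ is pinned at $\sum_j d_j$ there and each summand is nondecreasing, so each is constant, and $p_j$ is strictly increasing), so uniform continuity of $\hat{x}_i$ near that set suffices.
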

	\begin{proof}
		Let $p_i(z) := z - \phi_i(z)$ for all $z \in \cX_i$. 
		Then, $p_i(z)$ is uniformly continuous and strictly increasing.
		Therefore, $p_i^{-1}(\cdot)$ is also uniformly continuous and strictly increasing.
		Thus, there exists $\delta_1> 0$ such that for all $i \in \cN$,
		\begin{align*}
			|p_i(a') - p_i(b')| \leq \delta_1 \implies |a' - b'| \leq \epsilon
		\end{align*}
		and $\delta_2>0$ such that
		\begin{align*}
			|a - b| \leq \delta_2 \implies \left|\frac{dg_i^m(a)}{d\lambda} - \frac{dg_i^m(b)}{d\lambda}\right| \leq \frac{\delta_1}{3N}.
		\end{align*}
		Also define $M := \max_{\lambda} |G(\lambda)|$ which exists since $G(\lambda)$ is bounded.
		Finally, define $\bar{k} := 2 M/\sigma_2(L)\delta_2$.
		
		Let $V(\tilde{\xi}) = (1/2) \tilde{\xi}^T \tilde{\xi}$ be a candidate Lyapunov function. 
		Then, the time derivative of $V$ along the trajectories of \eqref{eq:dyn_trans_2} becomes
		\begin{align*}
			\dot{V} & \leq -k \sigma_2(L) |\tilde{\xi}|^2 + |R^T||G(1_N\bar{\xi} + Q \tilde{\xi})||\tilde{\xi}| \\ 
			& \leq -\frac{k \sigma_2(L)}{2}|\tilde{\xi}|^2, \quad \forall |\tilde{\xi}| \geq \frac{2 M}{k \sigma_2(L)\sqrt{N}}
		\end{align*}
		where $\sigma_2(L)$ is the second smallest eigenvalue of the $L$.
		Therefore, for all $k \geq \bar{k}$, it holds that
		$$
		|\tilde{\xi}(t)| \leq \frac{2 M }{k \sigma_2(L) \sqrt{N}} \leq \frac{\delta_2}{\sqrt{N}}
		$$
		for all $t \geq T_1(\bm{\lambda}(0),k)$ where 
		\begin{align*}
			T_1(\bm{\lambda}(0),k) = \ln\left( \frac{\sqrt{N}|R^T\bm{\lambda}(0)|}{\delta_2} \right) \cdot \frac{2}{k \sigma_2(L)} ,
		\end{align*}
		and $T_1(\bm{\lambda}(0),k) := 0$ if $|\tilde{\xi}(0)| \leq \delta_2/\sqrt{N}$ (or equivalently $|R^T\bm{\lambda}(0)| \leq \delta_2/\sqrt{N}$).
		Thus, $|Q_i\tilde{\xi}(t)| \leq \sqrt{N} \cdot (\delta_2/\sqrt{N}) = \delta_2.$  
		Hence, we obtain
		\begin{align} \label{eq:delta_eps}
			\left| \sum_{i=1}^N \frac{dg^m_i(\bar{\xi} + Q_i\tilde{\xi})}{d\lambda} - \frac{dg^m_i(\bar{\xi})}{d\lambda} \right| \leq N \cdot \frac{\delta_1}{3N} = \frac{\delta_1}{3}.
		\end{align}
		
		Define $\Lambda_{\delta_1}^* := \{w\in \bR \mid |dg^m(w)/d\lambda| \leq 2\delta_1/3\}$ and denote it as $\Lambda^*_{\delta_1} = [\underline{\Lambda}, \bar{\Lambda}]$.
		In fact, since $(dg^m/d\lambda)(\lambda)$ is monotonically decreasing, it holds that $(dg^m/d\lambda)(\lambda) \geq 2\delta_1/3$ for $\lambda \leq \underline{\Lambda}$ and $(dg^m/d\lambda)(\lambda) \leq -2\delta_1/3$ for $\lambda \geq \bar{\Lambda}$.
		From here on, suppose that $t \geq T_1(\bm{\lambda}(0),k)$.
		Then, the trajectory of $\bar{\xi}(t)$ approaches to $\Lambda^*_{\delta_1}$ with the speed of at least $\delta_1/(3N)$.
		For instance, if $\bar{\xi}(t) \geq \bar{\Lambda}$,  it follows from \eqref{eq:dyn_trans_1} and \eqref{eq:delta_eps} that 
		$$
		\dot{\bar{\xi}} \leq -\frac{2\delta_1}{3N} + \frac{\delta_1}{3N} = -\frac{\delta_1}{3N}.
		$$ 
		Hence, $\bar{\xi}(t)$ converges to $\Lambda^*_{\delta_1}$ in finite time.
		In order to compute the convergence time, it follows from \eqref{eq:feas_cond} that 
		\begin{align*}
			|\dot{\bar{\xi}}(t)| & \leq \left|\frac{1}{N}\sum_{i=1}^N d_i - x_i(\lambda_i(t)) + \phi_i\big(x_i(\lambda_i(t))\big) \right|\\
			& \leq \frac{1}{N} \sum_{i=1}^N \bar{x}_i - \phi_i(\bar{x}_i) - \underline{x}_i + \phi_i(\underline{x}_i) =: \frac{\Delta}{N}.
		\end{align*}
		Therefore, it follows that
		$$
		|\bar{\xi}\big(T_1(\bm{\lambda}(0),k)\big)| \leq |\bar{\xi}(0)| + \frac{\Delta}{N}T_1(\bm{\lambda}(0),k) =: \zeta^*.
		$$ 
		Define $D^*$ as
		$
		D^* := \max\big(|\zeta^*|_{\Lambda^*_{\delta_{1}}},\left|-\zeta^*\right|_{\Lambda^*_{\delta_1}}\big).
		$
		Then, it holds that 
		$
		\bar{\xi}(t) \in \Lambda^*_{\delta_1} 
		$
		for all $ t \geq T(\bm{\lambda}(0),k)$, where
		\begin{align*}
			T(\bm{\lambda}(0),k) :=  \frac{3ND^*}{\delta_1} + T_1(\bm{\lambda}(0),k).
		\end{align*}
		Finally, for some fixed $i$ and for all $t \geq T(\bm{\lambda}(0),k)$,
		\begin{align}
			& \left| p_i\big(\hat{x}_i(\lambda_i(t))\big) - p_i\big(\hat{x}_i(\lambda^*)\big)\right| \nonumber \\ 
			& \leq  \sum_{j=1}^N\left| p_j\big(\hat{x}_j(\lambda_i(t)\big) - p_j\big(\hat{x}_j(\lambda^*)\big) \right| \nonumber \label{eq:thm3_error_1}\\ 
			& =  \left| \sum_{j=1}^N p_j\big(\hat{x}_j(\lambda_i(t))\big) - p_j\big(\hat{x}_j(\lambda^*)\big) \right| = \left| -\frac{dg^m}{d\lambda}(\lambda_i) \right|
		\end{align}
		where the first equality holds since $p_j(\hat{x}_j(\cdot))$ is an increasing function, and the second equality holds since $\sum_{j=1}^N p_j\big(\hat{x}_j(\lambda^*)\big) = \sum_{j=1}^N d_j$.
		Note, we have
		\begin{align} \label{eq:g_m_bound}
			\left| \frac{dg^m}{d\lambda}(\lambda_i) \right| & \leq \left| \frac{dg^m}{d\lambda}(\bar{\xi}) \right| + \left| \frac{dg^m}{d\lambda}(\bar{\xi} + Q_i \tilde{\xi}) - \frac{dg^m}{d\lambda}(\bar{\xi}) \right| \nonumber \\
			& \leq \frac{2\delta_1}{3} + \frac{\delta_1}{3} = \delta_1.
		\end{align}
		Therefore, \eqref{eq:thm3_error_1} and \eqref{eq:g_m_bound} implies
		\begin{align*}
			\left|p_i\big(\hat{x}_i(\lambda_i(t))\big) - p_i(\hat{x}_i(\lambda^*)) \right| \leq \delta_1.
		\end{align*}
		By the definition of $\delta_1$, it follows that
		$$
		\left| \hat{x}_i(\lambda_i(t)) - \hat{x}_i(\lambda^*) \right| \leq \epsilon
		$$
		for all $t \geq T(\bm{\lambda}(0),k)$.
	\end{proof}
	
	\rt{
		An important feature of the proposed algorithm \eqref{eq:dist_dyn} is that it is an initialization-free algorithm and hence allows plug-and-play operation. 
		In particular, it can be seen from Theorem \ref{thm:feas} that the proposed algorithm converges to a feasible solution \textit{regardless} of the initial condition.
		Therefore, even if some parameters of the optimization problem \eqref{eq:prim_prob} changes, the solution of \eqref{eq:dist_dyn} converges to a feasible solution of the new problem.
		Additionally, converged solution is close to an optimal if the coupling gain $k$ is chosen as stated in Theorem \ref{thm:dist_opt}.
		To bound the performance uniformly across changes, the coupling gain must be chosen sufficiently large to incorporate all possible cases.
		For instance, such gain can be found from the worst case scenario by assuming that the network has a known maximum capacity and parameters such as $f_i$ and $\phi_i$ are from a finite collection.
		More detailed discussions can be found in \cite[Sec. 6.1]{Yun2019} or \cite{Kim2016a}.%
	}
	
	\begin{rem}
		Using similar arguments as in \eqref{eq:g_m_bound}, it holds that the constraint violation and the objective error $f_i(x_i(t)) - f_i(\hat{x}_i(\lambda^*))$ become small in finite time. 
		In particular, $|\sum_{i=1}^N d_i - x_i(t) + \phi_i(x_i(t))| \leq \delta_1$ and 
		$
		|f_i\big(x_i(\lambda_i(t))\big) - f_i(\hat{x}_i(\lambda^*))| \leq c_i
		$
		holds for all $t \geq T(\bm{\lambda}(0),k)$ where $c_i > 0$ is such that $|a - b| \leq \epsilon$ implies $|f_i(a) - f_i(b)| \leq c_i$ for any $a,b \in \cX_i$.
		Thus, constraint violation and objective error can be made arbitrarily small by reducing $\epsilon$ which leads to higher coupling gain $k$.
	\end{rem}
	
	Results of Theorems \ref{thm:feas} and Theorem \ref{thm:dist_opt} assume that the problem \eqref{eq:prim_prob} is feasible. 
	The behavior of the proposed algorithm is also analyzed when the problem is infeasible.
	
	\begin{thm} \label{thm:infeas}
		Suppose that Assumptions \ref{ass:basic}, \ref{ass:mono} and \ref{ass:graph} hold.
		Assume that the optimization problem \eqref{eq:prim_prob} is infeasible. 
		Specifically, suppose that 
		$
		\textstyle\sum_{i=1}^{N} d_i - \bar{x}_i + \phi_i(\bar{x}_i) > 0
		$
		holds.
		Then, for all $i \in \cN$, $\lambda_i(t)$ diverges to $+\infty$ and
		$
		\lim_{t \rightarrow \infty} \dot{\lambda}_i(t) = D_0,
		$
		where $D_0 := (\sum_{i=1}^N d_i - \bar{x}_i + \phi_i(\bar{x}_i))/N > 0$. Similar result also holds in the case of $
		\textstyle\sum_{i=1}^{N} d_i - \underline{x}_i + \phi_i(\underline{x}_i) < 0.
		$
	\end{thm}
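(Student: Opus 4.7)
The plan is to work in the transformed coordinates $(\bar{\xi},\tilde{\xi})$ from \eqref{eq:coord_trans} and split the argument into three stages: first show $\tilde{\xi}(t)$ stays bounded, then show $\bar{\xi}(t)$ diverges to $+\infty$ at average rate $D_0$, and finally show that once all $\lambda_i$ are sufficiently large, the dynamics become linear with a constant forcing term whose equilibrium gives the claimed asymptotic rate.

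For the first stage, I would reuse the ISS-type argument from the proof of Theorem~\ref{thm:feas}: along \eqref{eq:dyn_trans_2}, the quadratic $V(\tilde{\xi})=\tfrac{1}{2}|\tilde{\xi}|^2$ satisfies $\dot V \leq -k\sigma_2(L)|\tilde{\xi}|^2 + |R^T|\cdot M \cdot |\tilde{\xi}|$, where $M:=\max_{\lambda}|G(\lambda)|$ is finite because each $dg_i^m/d\lambda$ is uniformly bounded. Hence $|\tilde{\xi}(t)|$ is bounded by some constant $B$ depending on $\tilde{\xi}(0)$ and system parameters; in particular the pairwise differences $\lambda_i-\lambda_j$ remain bounded uniformly in $t$.

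For the second stage, the key observation is that $p_i(z):=z-\phi_i(z)$ is strictly increasing on $\cX_i$ (by \eqref{ass:loss_slope}) and $\hat{x}_i(\lambda)\leq \bar{x}_i$ for all $\lambda$, so
\begin{align*}
\frac{dg_i^m}{d\lambda}(\lambda_i) = d_i - p_i(\hat{x}_i(\lambda_i)) \geq d_i - p_i(\bar{x}_i).
\end{align*}
Averaging \eqref{eq:dyn_trans_1} therefore gives $\dot{\bar{\xi}}(t)\geq \tfrac{1}{N}\sum_{i=1}^N(d_i-\bar{x}_i+\phi_i(\bar{x}_i))=D_0>0$ at every time, so $\bar{\xi}(t)\geq \bar{\xi}(0)+D_0 t\to +\infty$. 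Combined with $|\tilde{\xi}(t)|\leq B$ and $\lambda_i=\bar{\xi}+Q_i\tilde{\xi}$, this yields $\lambda_i(t)\to +\infty$ for every $i$.

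For the final stage, because $\lambda_i(t)\to \infty$, there exists $T^*$ such that $\lambda_i(t)\geq v_i(\bar{x}_i)$ for all $i\in\cN$ and all $t\geq T^*$. On this time interval \eqref{eq:optimal_power_extend} gives $\hat{x}_i(\lambda_i)=\bar{x}_i$, so $G(\bm{\lambda}(t))\equiv c$ is the constant vector with entries $c_i=d_i-\bar{x}_i+\phi_i(\bar{x}_i)$, and $1_N^T c = N D_0$. Summing \eqref{eq:thm_dyn} then yields $\dot{\bar{\xi}}\equiv D_0$ exactly, while \eqref{eq:dyn_trans_2} reduces to the linear exponentially stable system $\dot{\tilde{\xi}}=-kR^TLQ\,\tilde{\xi}+R^T c$ (since $R^TLQ$ is positive definite by Assumption~\ref{ass:graph}). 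Thus $\tilde{\xi}(t)$ converges to its equilibrium, $\dot{\tilde{\xi}}(t)\to 0$, and therefore $\dot{\lambda}_i(t)=\dot{\bar{\xi}}(t)+Q_i\dot{\tilde{\xi}}(t)\to D_0$. The symmetric case $\sum_i d_i-\underline{x}_i+\phi_i(\underline{x}_i)<0$ follows by the mirror argument using the lower branch of \eqref{eq:optimal_power_extend}. The most delicate part is the second stage: one must exploit the monotonicity and boundedness of $p_i$ to get a strictly positive, time-uniform lower bound on $\dot{\bar{\xi}}$ before knowing anything about where the $\lambda_i$'s are; everything else is either a restatement of Theorem~\ref{thm:feas}'s boundedness argument or standard linear-system analysis.
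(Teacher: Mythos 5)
Your proposal is correct and follows essentially the same route as the paper: boundedness of $\tilde{\xi}$ from the Lyapunov argument of Theorem~\ref{thm:feas}, the uniform bound $\dot{\bar{\xi}} \geq D_0$ obtained from $\hat{x}_i(\lambda_i) \leq \bar{x}_i$ and monotonicity of $p_i(z)=z-\phi_i(z)$ under \eqref{ass:loss_slope}, and then saturation of $\hat{x}_i$ at $\bar{x}_i$ making $G$ constant so that $\dot{\tilde{\xi}} \to 0$ and $\dot{\lambda}_i \to D_0$. Your final stage is a slightly sharper version of the paper's limiting argument (you note that after a finite time $T^*$ the dynamics of $\tilde{\xi}$ become exactly linear with constant input), but this is a refinement of the same idea rather than a different proof.
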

	\begin{proof}
		 From the proof of Theorem \ref{thm:feas}, it holds that $\tilde{\xi}(t)$ is bounded. Moreover, we have
		 \begin{align*}
			 \dot{\bar{\xi}} & = \frac{1}{N} \sum_{i=1}^N d_i - \hat{x}_i(\bar{\xi} + Q_i \tilde{\xi})  + \phi_i\left(\hat{x}_i( \bar{\xi} + Q_i \tilde{\xi})\right) \\ 
			 & \geq \frac{1}{N} \sum_{i=1}^N d_i - \bar{x}_i + \phi_i\left(\bar{x}_i\right)  = D_0 > 0.
		 \end{align*}
		 Therefore, it follows that $\lim_{t \rightarrow \infty} \bar{\xi}(t) = +\infty$. 
		 Thus,
		 \begin{align*}
			 \lim\limits_{t \rightarrow \infty} \lambda_i(t) = \lim\limits_{t \rightarrow \infty} \bar{\xi}(t) + Q_i\tilde{\xi}(t) = +\infty
		 \end{align*}
		 and hence $\lambda_i(t)$ diverges to $+\infty$.
		 
		 From \eqref{eq:optimal_power_extend}, it holds that $dg_i^m(\bar{\xi})/d\lambda$ is a constant for sufficiently large $\bar{\xi}$. Therefore, it can be verified from \eqref{eq:dyn_trans_1} and the boundedness of $\tilde{\xi}$ that $\dot{\bar{\xi}}$ satisfies
		 $$
		 \lim\limits_{t \rightarrow \infty}\dot{\bar{\xi}}(t) = D_0.
		 $$
		 In addition, boundedness of $\tilde{\xi}$ and divergence of $\bar{\xi}$ implies
		 \begin{align*}
			 \lim\limits_{t \rightarrow \infty} R^TG(1_N\bar{\xi} + Q\tilde{\xi}) & = R^T (d - \bar{x} + \phi(\bar{x})).
		 \end{align*}
		 In particular, $\tilde{\xi}(t)$ converges to a constant value. Therefore,
		 \begin{align*}
			 \lim\limits_{t \rightarrow \infty} \dot{\lambda}_i(t) = \dot{\bar{\xi}}(t) + Q\dot{\tilde{\xi}} = D_0
		 \end{align*}
		 which completes the proof.
	\end{proof}
	
	\section{Simulation} \label{sec:sim}
	\subsection{Robustness to Changes}	
	For the simulation, the continuous-time algorithm \eqref{eq:dist_dyn} is discretized using the forward difference method.
	In particular, if we denote the sampling period with $T$, then the distributed algorithm \eqref{eq:dist_dyn} can be discretized into 
	\begin{multline} \label{eq:forward}
		\lambda_i^\rd((q+1)T) =  \lambda_i^\rd(qT) + T \frac{dg_i^m}{d\lambda} (\lambda_i^\rd(qT)) \\
		+ T k \sum_{j \in \cN_i} (\lambda_j^\rd(qT) - \lambda_i^\rd(qT)),
	\end{multline}
	for $i \in \cN$, where $q\geq0$ is integer and $\lambda_i^\rd$ is the discretized state.
	In particular, $\lambda_i^\rd(t)$ is a piecewise constant, right continuous signal which is updated every $T$ seconds.
	Specifically, the value of $\lambda_i^\rd(t)$ is held constant until the next update.

	Numerical simulation is done with IEEE 30 bus system \cite{Alsac1974} to verify the proposed algorithm. 
	The local cost function is given by $f_i(x_i) = a_i + b_i x_i + c_i x_i^2$ where $b_i$ and $c_i$ are strictly positive for $i \in \{1,2,5,8,11,13\}$ which are the buses with a generator. 
	The power demand of each bus satisfies $d_i \in [0,94.2]$ and $\sum_{i=1}^N d_i = 283.4$. 
	The loss function is chosen as $\phi_i(x_i) = \alpha_i x_i^2$ where $\alpha_i \in [0.0001,0.0007]$ are chosen randomly. 
	It is assumed that each bus is capable of running the proposed algorithm \eqref{eq:forward} and that two buses connected by a branch can communicate. 
	Coupling gain of $k=40$ is used for the simulation. 
	\rt{For the implementation, \eqref{eq:forward} is used with the sampling time of $0.005$ seconds. }%	
	We consider the following scenario:
	\begin{itemize}
		\item[S1)] Normal operation condition for $0 \leq t \leq 10s$.
		\item[S2)] At $t=10s$, demand at bus $5$ is decreased by $20\%$.
		\item[S3)] At $t=20s$, generator at bus $1$ stops generation and leaves the network.
		\item[S4)] At $t=30s$, bus $1$ joins the network again and the maximum generation at bus $8$ increases by $20\%$.
	\end{itemize}
	\rt{
	Note in particular that the above scenarios are simulated in one continuous session. 
	Moreover, network topology changes as a node leaves and joins the network during the operation.
	}
	
	Simulation results are shown in Fig. \ref{fig:graph1}. 
	It can be seen that an optimal solution is obtained in a distributed manner \rt{despite the changes in operation conditions}. 
	At $t=20s$, the problem  \eqref{eq:prim_prob} becomes infeasible due to the lack of generation at bus $1$. 
	Hence, the trajectory of $\lambda_i^\rd(t)$ diverges which verifies the result of Theorem \ref{thm:infeas}. 
	As feasibility is recovered at $t=30s$, $\lambda_i^\rd(t)$ converges again.
	Value of the cost function is also shown in Fig. \ref{fig:graph1}(c) and it is seen that the optimal cost is approximately recovered.
	The trajectory of power mismatch $\sum_{i=1}^N d_i - x_i(t) + \phi_i(x_i(t))$ is shown in Fig. \ref{fig:graph1}(d).	
	It is observed that the mismatch converges to zero (except for the case when the problem is infeasible) implying supply-demand balance is satisfied.
	
	From the repeated simulations, we also observed that the coupling gain $k$ and the sampling time $T$ have a close relationship for the stability. 
	Specifically, if the gain $k$ is large, the sampling time must be reduced to yield a stable algorithm.
	For example, it is observed that \eqref{eq:forward} diverges with $T = 0.01$ but converges if $T = 0.005$.
	\begin{figure}[h]
		\centering
		\begin{subfigure}[b]{\columnwidth}
			\centering
			\includegraphics[scale=0.19]{./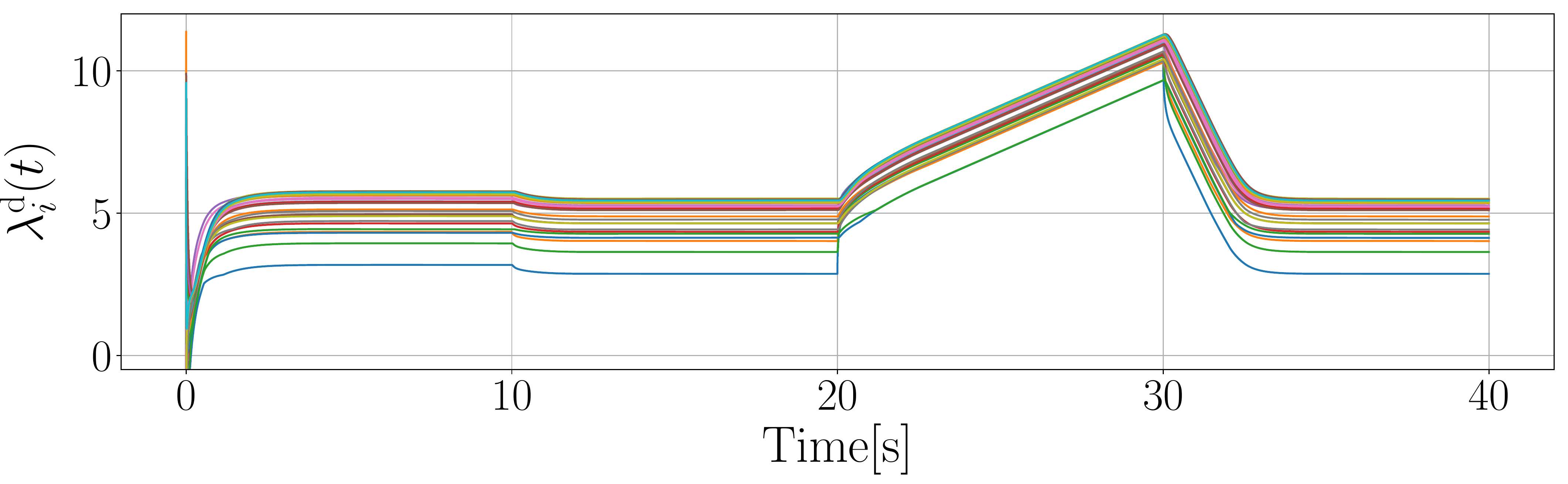}
			\caption{Trajectories of $\lambda_i^\rd(t)$}
			\vspace{1em}
		\end{subfigure}
		\begin{subfigure}[b]{\columnwidth}
			\centering
			\includegraphics[scale=0.19]{./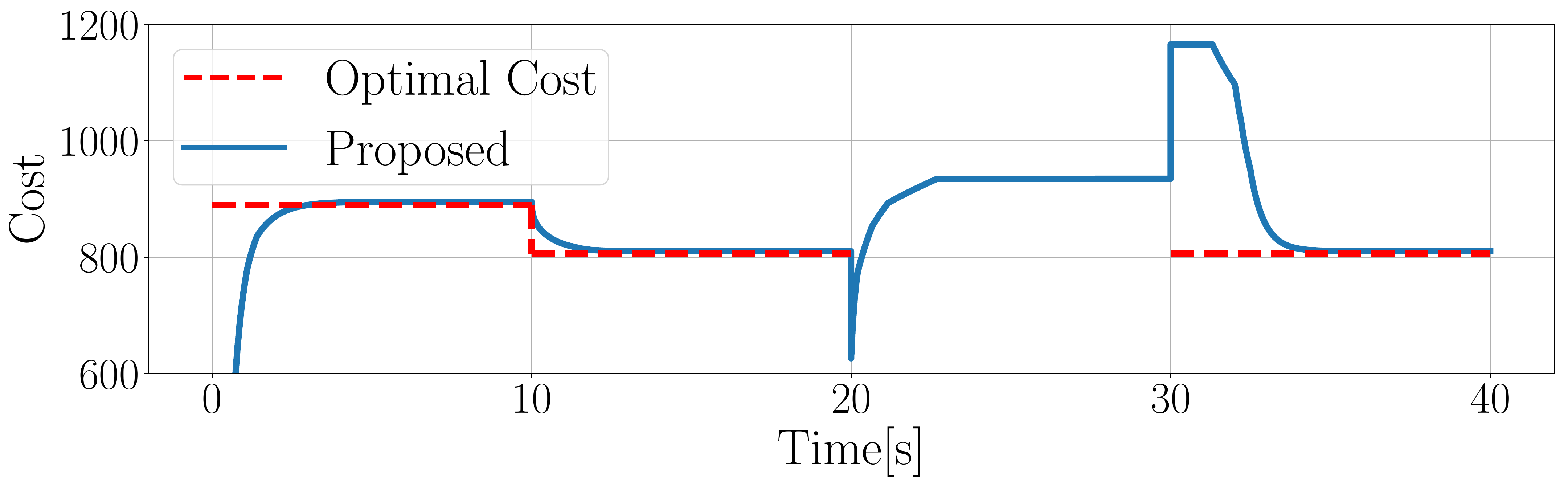}
			\caption{Trajectory of $\sum_{i=1}^N f_i(\hat{x}_i(t))$ and optimal cost}
			\vspace{1em}
		\end{subfigure}
		\begin{subfigure}[b]{\columnwidth}
			\centering
			\includegraphics[scale=0.19]{./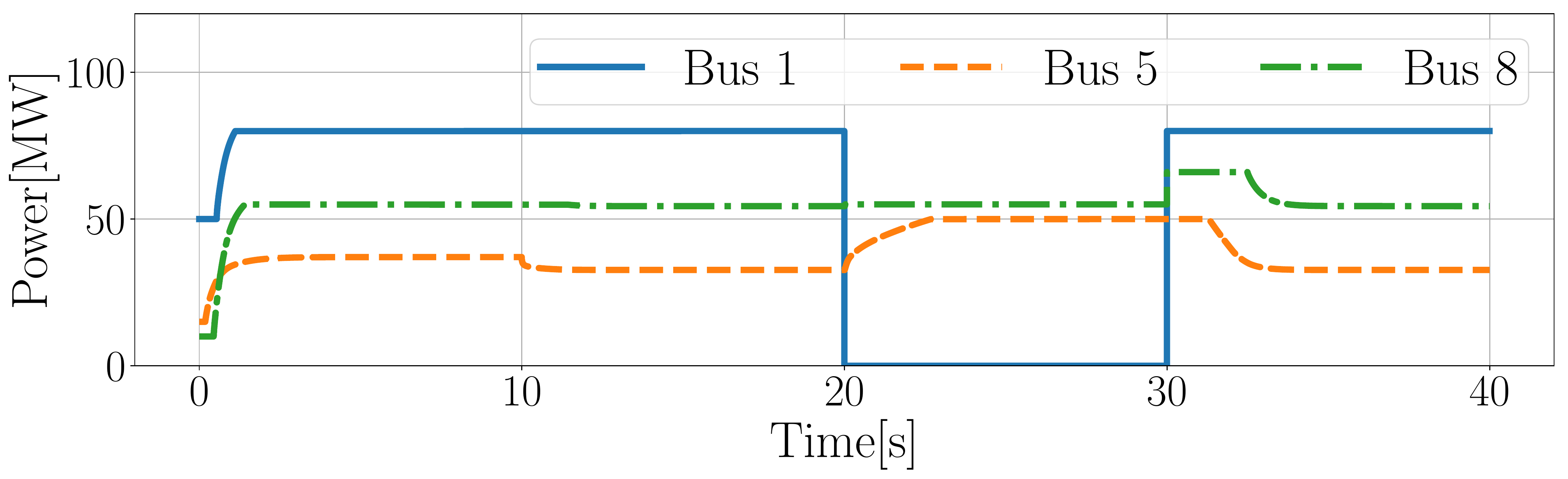}
			\caption{Trajectories of $\hat{x}_i(t)$ for selected buses}
			\vspace{1em}
		\end{subfigure}
		\begin{subfigure}[b]{\columnwidth}
			\centering
			\includegraphics[scale=0.19]{./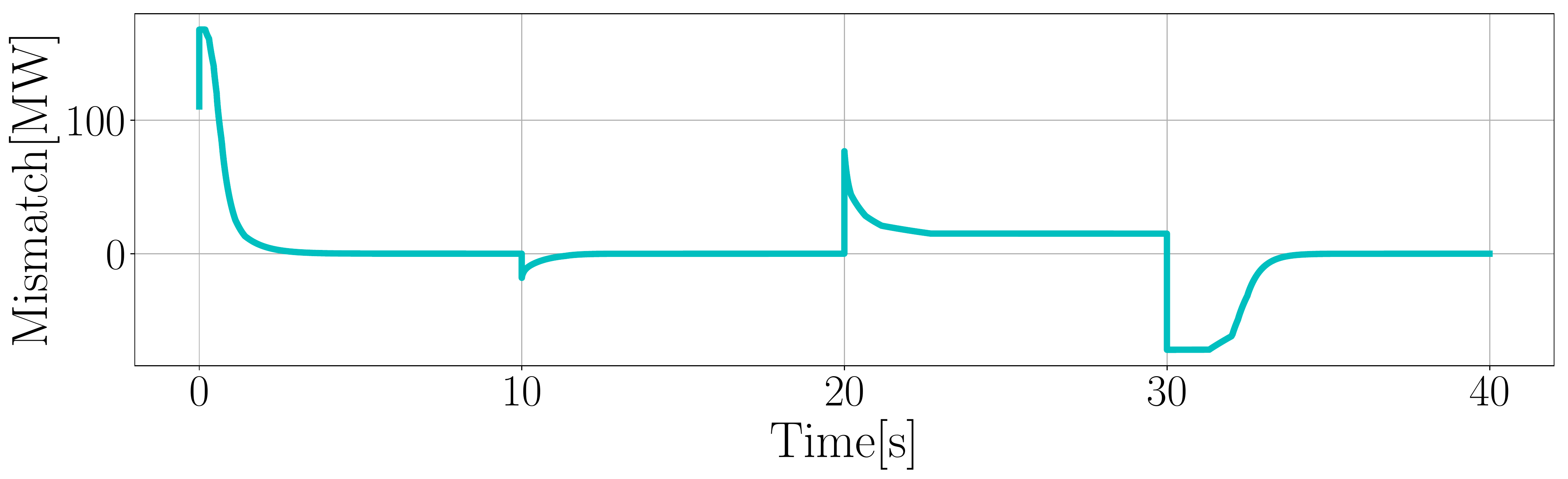}
			\caption{Trajectory of power mismatch $\sum_{i=1}^N d_i - x_i(t) + \phi_i(x_i(t))$}
		\end{subfigure}
		\caption{ Simulation results} \label{fig:graph1}
	\end{figure}
	
	\subsection{Asynchronous Update}	
	The discretized algorithm \eqref{eq:forward} supposes that all agents use the same sampling time $T$ and that the update is made synchronously across all agents. 
	However, it may be hard to implement the synchronous algorithms in practice due to the distributed nature of the system.
	Instead, in this section, we consider the case where each agent uses different sampling time $T_i$ and that the update is done asynchronously. 
	In particular, let $q \geq 0$ be an integer representing the time index. Then, the distributed algorithm \eqref{eq:dist_dyn} becomes
	\begin{align} \label{eq:forward_asynch}
		\lambda_i^\rd((q+1)T_i) & =  \lambda_i^\rd(qT_i) + T_i \frac{dg_i^m}{d\lambda} (\lambda_i^\rd(qT_i)) \nonumber\\
		& \qquad + T_i k \sum_{j \in \cN_i} \left(\lambda_j^\rd(qT_i) - \lambda_i^\rd(qT_i) \right)
	\end{align}
	where $\lambda_i^\rd(t)$ is a piecewise constant, right continuous signal which is updated every $T_i$ seconds.
	Note that the algorithm \eqref{eq:forward_asynch} is equivalent to \eqref{eq:forward} if $T_i = T$ for all $i \in \cN$.
	However, if sampling time is different between agents, then some agents update more frequently than others.
		
	The same IEEE 30 bus system is used to simulate the algorithm \eqref{eq:forward_asynch}. 
	For the simulation, gain of $k=20$ is used while sampling time of $0.009s$ is used for all agents except for the ones denoted in Table \ref{table:sampling_time}.
	Specifically, we consider the case when some nodes update less frequently.
	Simulation results are shown in Fig. \ref{fig:graph2}. 
	Trajectories of $\lambda_i^\rd(t)$ for selected bus is shown in Fig. \ref{fig:graph2}(a) and Fig. \ref{fig:graph2}(b).
	It is clearly seen that each variable is updated asynchronously, and that some agents are updated more frequently. 
	Nonetheless, the solution $\lambda_i^\rd(t)$ still converges.
	Additionally, Fig. \ref{fig:graph2}(c) depicts that the converged solution satisfies supply and demand balance.
	
	\begin{table}
		\def\arraystretch{1.5}%
		\centering
		\begin{tabular}{c|c||c|c}
			Bus & Sampling Time (s) & Bus & Sampling Time (s) \\ \hline
			Bus $5$ & $0.05$ & Bus $11$ & $0.03$ \\ 
			Bus $16$ & $0.07$ & Bus $17$ & $0.02$ \\ 
			Bus $21$ & $0.04$ \\ 
		\end{tabular}
		\caption{Sampling time used for each node.}
		\label{table:sampling_time}
	\end{table}
		
	\begin{figure}[h]
		\centering
		\begin{subfigure}[b]{\columnwidth}
			\centering
			\includegraphics[scale=0.19]{./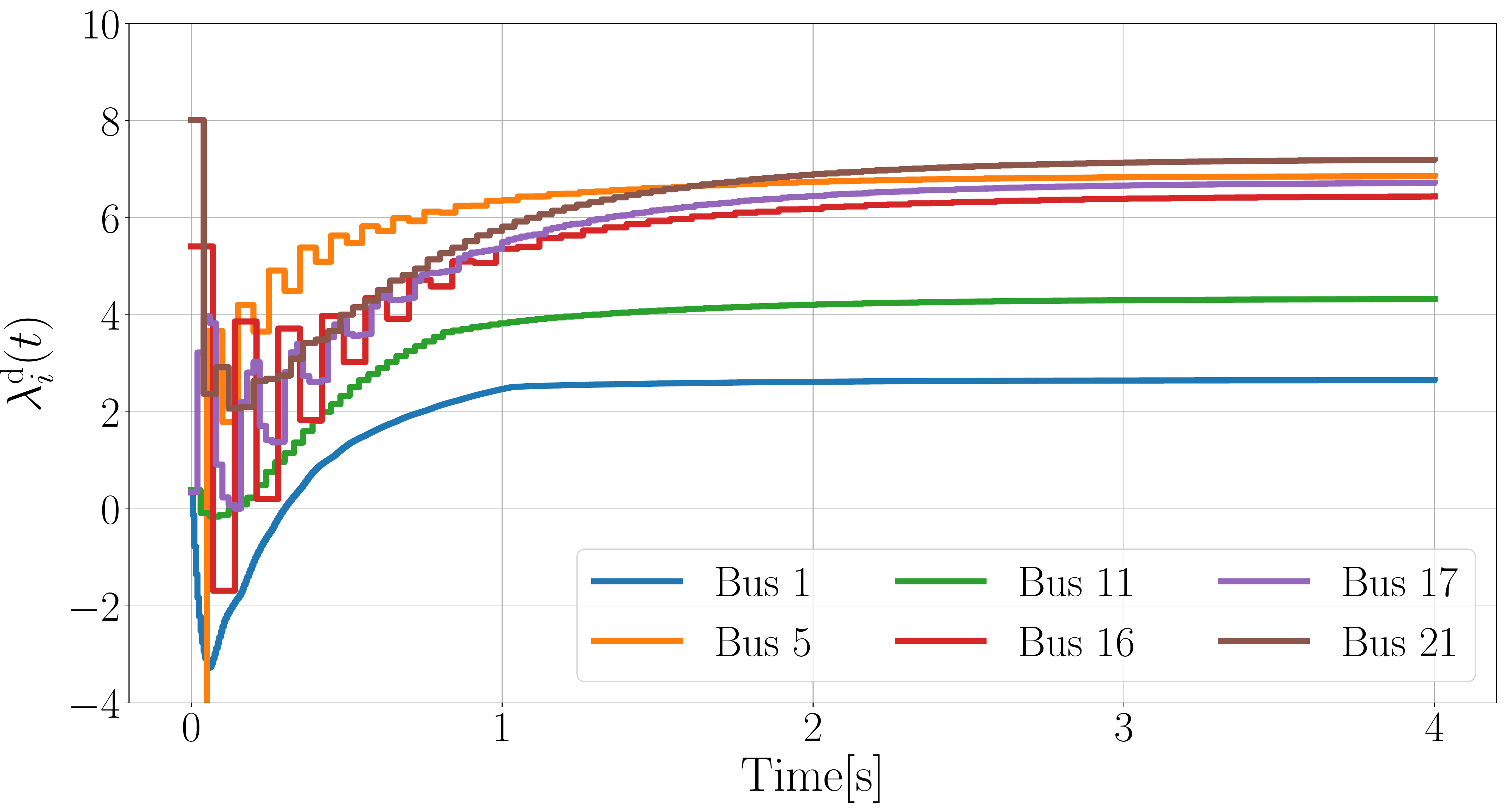}
			\caption{Trajectories of $\lambda_i^\rd(t)$}
			\vspace{1em}
		\end{subfigure}
		\begin{subfigure}[b]{\columnwidth}
			\centering
			\includegraphics[scale=0.19]{./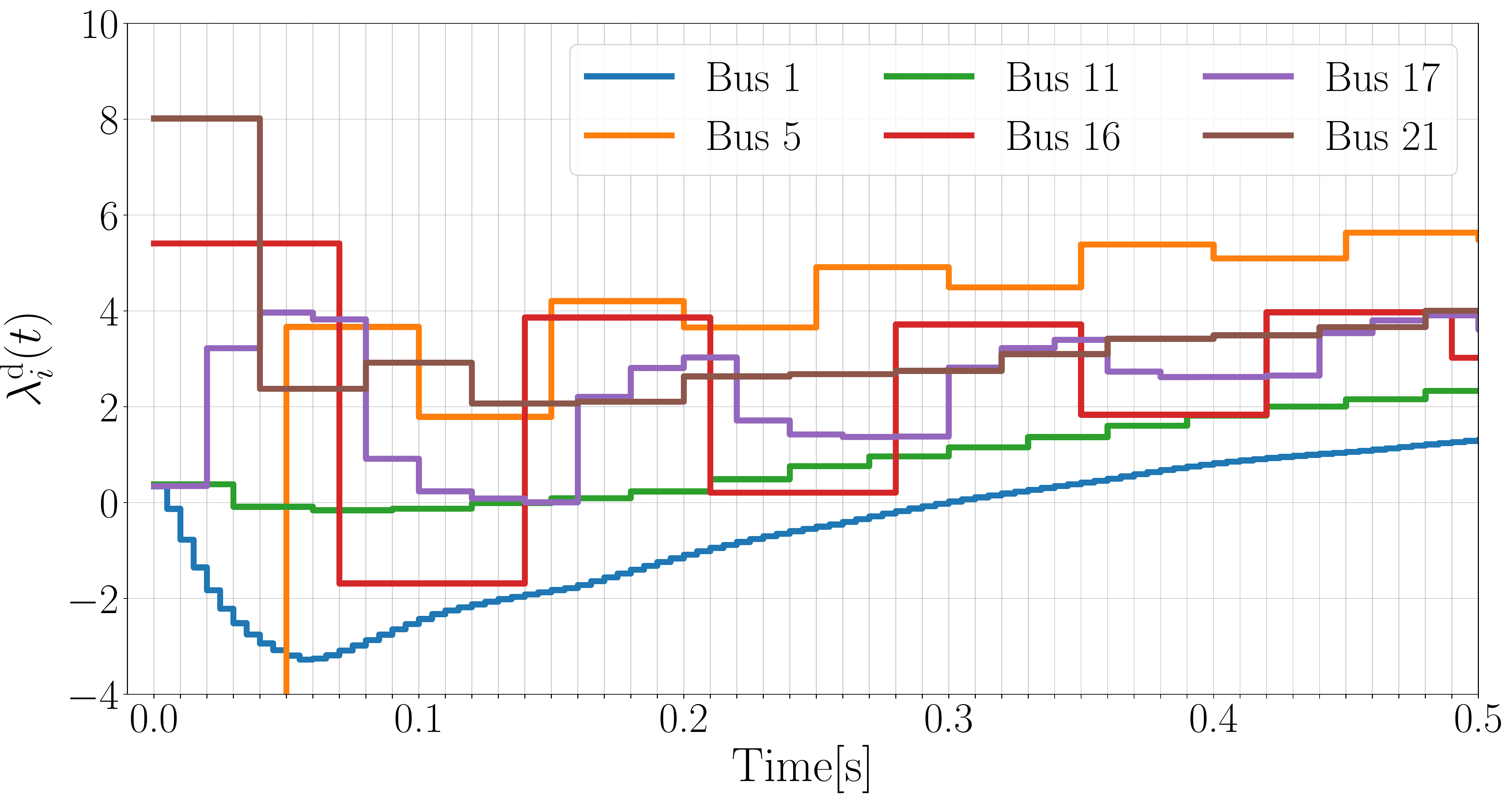}
			\caption{Trajectories of $\lambda_i^\rd(t)$ for $0 \leq t \leq 0.5$}
			\vspace{1em}
		\end{subfigure}
		\begin{subfigure}[b]{\columnwidth}
			\centering
			\includegraphics[scale=0.19]{./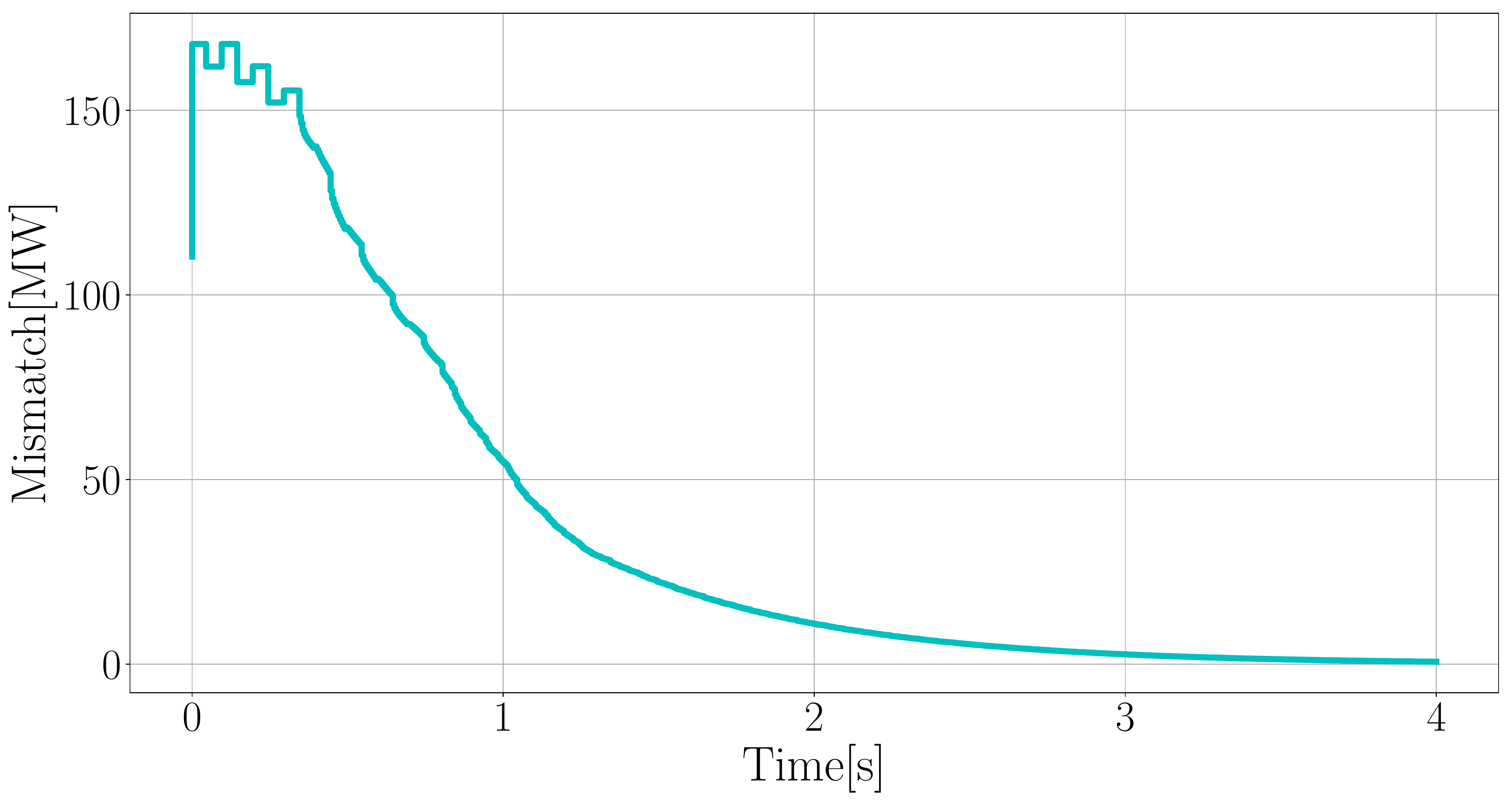}
			\caption{Trajectories of power mismatch $\sum_{i=1}^N d_i - x_i(t) + \phi_i(x_i(t))$}
			\vspace{1em}
		\end{subfigure}
		\caption{ Simulation results for asynchronous update.} \label{fig:graph2}
	\end{figure}
			
	\section{Conclusion and Future Works} \label{sec:con}
	The economic dispatch problem with nonlinear, separable power losses has been studied in this paper. 
	Due to the addition of nonlinear loss, the EDP becomes a non-convex optimization problem. 
	However, it has been shown that convex relaxation with dual decomposition can be used to obtain an optimal solution. 
	The distributed algorithm is proposed and it is shown to converge to a feasible solution while an optimal solution is recovered with sufficiently high coupling gain. 
	Specifically, the proposed algorithm does not require any initialization process and converges from any initial condition.
	Moreover, the behavior of the proposed algorithm is analyzed when the problem is infeasible. % and it is shown that each agent may detect the infeasibility. 
	\rt{
		Future works include the theoretical analysis of the discretized version of the proposed algorithm.
	}

\end{document}